\newtheorem{theorem}{Theorem}
\definecolor{mygray}{rgb}{0.5,0.5,0.5}
\tiny\color{mygray}, 
\title{Optimality Study of Existing Quantum Computing Layout Synthesis Tools}
\author{
 Bochen Tan\\
  CS, UCLA\\
  \texttt{bctan@cs.ucla.edu}\\
   \and
 \textbf{Jason Cong} \\
  CS, UCLA\\
}
\begin{document}
© 2020 IEEE.  Personal use of this material is permitted.  Permission from IEEE must be obtained for all other uses, in any current or future media, including reprinting/republishing this material for advertising or promotional purposes, creating new collective works, for resale or redistribution to servers or lists, or reuse of any copyrighted component of this work in other works.

\bigskip
\bigskip
\bigskip
\maketitle
\begin{abstract}
Layout synthesis, an important step in quantum computing, processes quantum circuits to satisfy device layout constraints. 
In this paper, we construct QUEKO benchmarks for this problem, which have known optimal depths and gate counts. 
We use QUEKO to evaluate the optimality of current layout synthesis tools, including Cirq from Google, Qiskit from IBM, $\mathsf{t}|\mathsf{ket}\rangle$ from Cambridge Quantum Computing, and recent academic work. 
To our surprise, despite over a decade of research and development by academia and industry on compilation and synthesis for quantum circuits, we are still able to demonstrate large optimality gaps: 1.5-12x on average on a smaller device and 5-45x on average on a larger device.
This suggests substantial room for improvement of the efficiency of quantum computer by better layout synthesis tools.
Finally, we also prove the NP-completeness of the layout synthesis problem for quantum computing.
We have made the QUEKO benchmarks open-source.

\end{abstract}

\section{Introduction}\label{sec:introduction}

Recently, a quantum processor ``Sycamore'' from Google was shown to have a clear advantage over classical supercomputers on a problem named sampling random quantum circuit~\cite{arute-2019-quantum}. 
It is widely expected that in the near future, quantum computing (QC) will outperform its classical counterpart solving more and more problems. 
Achieving this computational advantage, however, requires executing larger and larger quantum circuits.

A quantum circuit consists of quantum gates acting on qubits. 
It was shown that only gates acting on one or two qubits are required for universal quantum computing~\cite{nielsen-2010-quantum}. 
After a quantum circuit is designed, it needs to be mapped to a QC device. 
However, qubit connections required by two-qubit gates are often greatly constrained by QC device layouts. 
QC layout synthesis resolves this issue by producing an initial mapping from the qubits in the circuit to the physical qubits on the QC device, adjusting the mapping to legalize two-qubit gates by inserting some new gates, and scheduling all the gates. 
The resulting circuit preserves the original functionality and is executable on the QC device.

Quantum circuits in earlier experiments used to be only dozens of gates on small devices, e.g., those with 5 or fewer qubits. 
In those cases, layout synthesis was usually realized by exhaustive enumeration. 
However, the task is increasingly intractable as the circuits get deeper and wider. 
Nowadays, a cutting-edge QC experiment requires the execution of a circuit of 53 qubits, 1113 single-qubit gates, and 430 two-qubit gates~\cite{arute-2019-quantum}. 
For a general circuit of this size, the number of possible initial mappings is $53!$, and the subsequent scheduling and legalization steps have large solution space as well. 
Clearly, design automation is necessary. 
In addition, the size of the circuits that QC hardware is able to execute has been scaling exponentially in the past few years~\cite{gambetta-2019-cramming}.
The fast increase of hardware capacity presents an even bigger challenge to layout synthesis.

Several layout synthesis tools are available and there are also benchmarks that help us to compare them. 
However, it is currently unknown how far these tools are away from the optimal solutions.
In this paper, we present QUEKO benchmarks (quantum mapping examples with known optimal) which are quantum circuits with known optimal depth for the given QC device. 
Then, we evaluate four existing QC layout synthesis tools with QUEKO, namely $\mathsf{t}|\mathsf{ket}\rangle$\footnote{\label{nt:tket}\url{https://github.com/CQCL/pytket}}, \verb|greedy| router included in Cirq\footnote{\label{nt:cirq}\url{https://github.com/quantumlib/Cirq}}, \verb|DenseLayout| plus \verb|StochasticSwap| included in Qiskit\footnote{\label{nt:qiskit}\url{https://Qiskit.org/}}, and the recent academic work by Zulehner et al.~\cite{zulehner-2018-efficient}. 
To our surprise, rather large optimality gaps are discovered even for feasible-depth circuits: 1.5-12x on average on a smaller device and 5-45x on average on a larger device.

This result is somewhat surprising as, in comparison, the result of the VLSI circuit placement optimality study conducted more than 15 years ago using the PEKO benchmarks~\cite{chang-2004-optimality} revealed that the optimality gaps of the leading academic and industrial placers were 1.43-2.12X on average for circuits with over two million placable objects, while the quantum circuits used in this study have only up to 54 qubits and about 35 thousand quantum gates, yet shown a much large optimality gap.

The optimality gaps revealed in this study have a strong implication. 
If we can consistently halve the circuit depth by better layout synthesis, we effectively double the decoherence time of a QC device, which is equivalent to a large advancement in experimental physics and electrical engineering. 
Therefore, the gaps call for more research investments into QC layout synthesis. 
To draw a parallel, the VLSI placement optimality study using the PEKO benchmarks~\cite{chang-2004-optimality} spurred further research investment in circuit placement, resulting in wirelength reduction equivalent to two or more generations of Moore's Law scaling, but in a more cost-efficient way~\cite{misc-2007-huge}.

The rest of this paper is organized as follows: Sec.~\ref{sec:background} reviews relevant background of QC, Sec.~\ref{sec:problem} formulates the QC layout synthesis problem; Sec.~\ref{sec:related} reviews related work; Sec.~\ref{sec:approach} provides the construction of QUEKO benchmarks; Sec.~\ref{sec:experiment} evaluates aforementioned tools with QUEKO; Sec.~\ref{sec:complexity} proves the NP-completeness of QC layout synthesis; Sec.~\ref{sec:conclusion} gives conclusion.

\section{Background}\label{sec:background}
\subsection{Qubits}
A qubit is in a quantum state $|\psi\rangle$ represented by a vector in two-dimensional Hilbert space with $L^2$-norm equals $1$
\begin{equation}
|\psi\rangle=\begin{pmatrix} \alpha \\ \beta \end{pmatrix}=\alpha\begin{pmatrix} 1 \\ 0 \end{pmatrix}+ \beta \begin{pmatrix} 0 \\ 1 \end{pmatrix}=\alpha|0\rangle+\beta|1\rangle,
\end{equation}
where the two basis vectors are $|0\rangle$ and $|1\rangle$. A quantum state of multiple qubits lies in the tensor product of individual Hilbert spaces. 
For instance, a general two-qubit state $|\phi\rangle$ is
\begin{equation}
|\phi\rangle= \alpha|0\rangle|0\rangle+\beta|0\rangle|1\rangle+\gamma|1\rangle|0\rangle+\lambda|1\rangle|1\rangle=(\alpha\ \beta\ \gamma\ \lambda)^T,
\end{equation}
where we omit the tensor product notation $\otimes$ between $|\cdot\rangle$s for convenience. 
A joint state of two individual qubits $|\psi_1\rangle\otimes|\psi_0\rangle$ is
\begin{equation}
|\psi_1\rangle|\psi_0\rangle=\begin{pmatrix} \alpha \\ \beta \end{pmatrix}\otimes \begin{pmatrix} \alpha' \\ \beta' \end{pmatrix}=(\alpha\alpha'\ \alpha\beta'\ \beta\alpha'\ \beta\beta')^T.
\end{equation}

\subsection{Quantum Gates} \label{sec:gates}
A quantum gate transforms an input state to an output state. 
For example, some common single-qubit gates are $X$, $H$, and $T$. 
$\dagger$ means transpose complex conjugate.
\begin{equation}
     X|\psi\rangle=\begin{pmatrix} 0 & 1\\ 1 & 0\end{pmatrix}\begin{pmatrix} \alpha \\ \beta \end{pmatrix}=\begin{pmatrix} \beta \\ \alpha \end{pmatrix},\quad H=\frac{\sqrt{2}}{2}\begin{pmatrix} 1 & 1 \\ 1 & -1 \end{pmatrix}, \\
     T=\begin{pmatrix} 1 & 0 \\ 0 & e^{i\pi /4} \end{pmatrix}, \quad T^\dagger=\begin{pmatrix} 1 & 0 \\ 0 & e^{-i\pi /4} \end{pmatrix}.
\end{equation}
Two common two-qubit gates are $CZ$ and $CX$ (also named $CNOT$). 
\begin{equation}
    CZ=\begin{pmatrix} 1 & 0 & 0 & 0\\ 0 & 1 & 0 & 0\\ 0 & 0 & 1 & 0 \\ 0 & 0 & 0 & -1\end{pmatrix},\quad CX=\begin{pmatrix} 1 & 0 & 0 & 0\\ 0 & 1 & 0 & 0\\ 0 & 0 & 0 & 1 \\ 0 & 0 & 1 & 0\end{pmatrix}.
\end{equation}
NAND gates are sufficient for universal classical computing.
For universal quantum computing, there are multiple complete gate sets. 
Table \ref{tab:gates} lists three such sets chosen by QC frameworks Cirq\footnotemark[\getrefnumber{nt:cirq}], Qiskit\footnotemark[\getrefnumber{nt:qiskit}], and pyQuil\footnote{\label{nt:pyquil}\url{https://github.com/rigetti/pyquil}}.
Other gates may be input to these frameworks, but would be translated into a combination of these native gates.
The exact matrix representations of these gates are not specified here because they are irrelevant to the purpose of this paper.
\begin{table}[htb]
\caption{Complete quantum gate set examples}
\begin{center}
\begin{tabular}{|c|c|c|}
\hline
\textbf{Framework} & \textbf{Single-qubit gate}& \textbf{Two-qubit gate} \\
\hline
Cirq (Google) & $Z(\lambda)$, phased $X$ power & $CZ$\\
\hline
Qiskit (IBM) & $U_1(\lambda), U_2(\phi, \lambda)$, $U_3(\theta,\phi,\lambda)$ & $CX$\\
\hline
pyQuil (Rigetti) & $RZ(\lambda), RX(k \pi/2)$  & $CZ$ \\
\hline
\end{tabular}
\label{tab:gates}
\end{center}
\end{table}

Another important gate is $CCX$, or Toffoli, gate which is universal for reversible logic and thus essential for QC logic synthesis~\cite{shende-2003-synthesis}. 
It is a quantum gate on three qubits and can be decomposed into a set of single-qubit and two-qubit gates as shown in the following subsection.
\begin{equation}
CCX=
\begin{pmatrix}
{1} & {0} & {0} & {0} & {0} & {0} & {0} & {0} \\
{0} & {1} & {0} & {0} & {0} & {0} & {0} & {0} \\
{0} & {0} & {1} & {0} & {0} & {0} & {0} & {0} \\
{0} & {0} & {0} & {1} & {0} & {0} & {0} & {0} \\
{0} & {0} & {0} & {0} & {1} & {0} & {0} & {0} \\
{0} & {0} & {0} & {0} & {0} & {1} & {0} & {0} \\
{0} & {0} & {0} & {0} & {0} & {0} & {0} & {1} \\
{0} & {0} & {0} & {0} & {0} & {0} & {1} & {0}
\end{pmatrix}
\end{equation}

An important property between two gates is whether they commute.
If the result of first applying gate 0 followed by gate 1 is the same with the result of the other way around, we say these two gates \textit{commute}.
If two gates act on totally different qubits, they always commute.
For gates acting on a same qubit, it is nontrivial to judge in general.
$T$ and $CZ$ commute because they are both diagonal matrices; but $X$ and $H$ do not commute, since $XH\neq HX$.

For more background knowledge, the reader can refer to a comprehensive textbook on quantum computing and quantum information~\cite{nielsen-2010-quantum}, or a concise textbook on quantum computing~\cite{book07-mermin} with more computer science flavor.

\subsection{Quantum Circuit}

In QC, a circuit or program is usually input as a piece of QASM code~\cite{cross-2017-open}, e.g., Fig.~\ref{fig:qasm}. 
The code is rather simple to read, merely specifying each gate sequentially like instructions in traditional assembly language. 
We thus define a {\it quantum circuit} to be a list of quantum gates $g_1g_2...g_M$.

\begin{figure}[htb]
\centering
\begin{minipage}[t]{0.5\linewidth}
\subfloat[QASM code of Toffoli circuit]{
\lstinputlisting{toffoli.txt}
\label{fig:qasm}
}
\end{minipage}
\vspace{\baselineskip}

\begin{minipage}[b]{\linewidth}
\centering
\subfloat[1D diagram of Toffoli circuit]{
\includegraphics[width=0.7\linewidth]{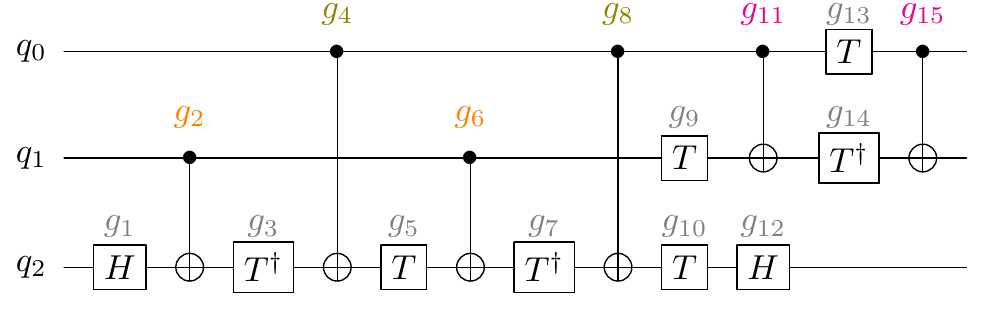}
\label{fig:QCDiagrams_1D}
}

\subfloat[Two-qubit gate set of Toffoli circuit]{
\includegraphics[scale=1]{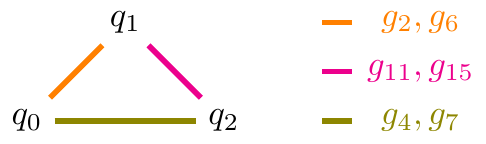}
\label{fig:2qbg-set}
}
\end{minipage}

\caption{Toffoli circuit (Single-qubit gates are colored gray. Identical two-qubit gates applied at different times have the same color, e.g., $g_2$ and $g_6$ are orange because they are both $CX(q_1, q_2)$ but at different times.)}
\label{fig:QCDiagrams}
\end{figure}

It is important to note that the qubits in a quantum circuit are {\it logical qubits} denoted as $q_1,..., q_N$. 
We do not use ``logical'' to indicate error correction in this paper. 
Additionally, we denote {\it qubit count} by $N$, {\it gate count} by $M$, {\it single-qubit gate count} by $M_1$, and {\it two-qubit gate count} by $M_2$. 
For instance, in Fig.~\ref{fig:QCDiagrams}, $N=3$, $M=15$, $M_1 = 9$, $M_2 = 6$.

We use the notation of gate like a set of qubits. 
We say the cardinality $|g|=1$ if $g$ is a single-qubit gate and $|g|=2$ if $g$ is a two-qubit gate. 
$g\cup g'$ denotes the set of qubits involved in $g$ or $g'$.
$g\cap g'$ denotes the set of qubits involved in $g$ and $g'$. 
For instance, in Fig.~\ref{fig:QCDiagrams}, $q_1, q_2\in g_2$, $q_0\in g_{13}$, $|g_4|=|g_6|=2$, $|g_7|=|g_9|=1$, $g_1\cap g_2=\{q_2\}$, and $g_8\cup g_9=\{q_0, q_1, q_2\}$.

A 1D circuit diagram can also represent the circuit, e.g., Fig.~\ref{fig:QCDiagrams_1D}. 
In such a diagram, each wire stands for a logical qubit. 
We draw the control qubit of $CX$ gate as $\bullet$ and the target qubit as $\oplus$.
The 1D diagram provides some implicit timing information, i.e., the gates on the same horizontal wire are executed from left to right. 
However, vertical alignment has no indication on timing, e.g., $g_8$ and $g_9$ can be executed simultaneously, but we separate them by some horizontal distance purely to avoid overlapping in the diagram.

\subsection{Quantum Computing Device Representation}

We represent the layout of a QC device with a {\it device graph} $G=(P, E)$ where each node $p$ stands for a {\it physical qubit} and each edge $e$ stands for a connection that enables two-qubit entangling gates, i.e., we can only perform such gates on two physical qubits that are connected. 
This graph is also named as {\it coupling graph} or {\it qubit connectivity}. 
Device graphs used in this paper are shown in Fig.~\ref{fig:layouts}.

In the coupling graphs, we assume the edges are bi-directional.
We made this decision based on the following reasons.
First, if the two-qubit gate in the gate library is symmetric, like $CZ$ gate used by Google and Rigetti as shown in Table~\ref{tab:gates}, then it does not make sense to have directed connections.
Second, even for asymmetric gates like $CX$, most of the IBM quantum computers\footnote{\url{https://quantum-computing.ibm.com/}} now support both directions for every edge.
Third, as we shall see in Sec.~\ref{sec:approach}, our construction of QUEKO benchmarks works for directed graphs as well.
So, the assumption of bi-directionality does not change the conclusion of this study.

\begin{figure}[tb]
\centering
\begin{minipage}[]{0.3\linewidth}
\centering
\subfloat[IBM's Ourense device graph]{
\begin{tikzpicture}[x=0.8cm,y=0.8cm]
\node[draw=cyan, circle, ultra thick] (q4) at (0, 1) {$p_4$};
\node[draw=cyan, circle, ultra thick] (q2) at (2, 1) {$p_2$};
\node[draw=cyan, circle, ultra thick] (q3) at (0, 0) {$p_3$};
\node[draw=cyan, circle, ultra thick] (q1) at (2, 0) {$p_1$};
\node[draw=cyan, circle, ultra thick] (q0) at (4, 0) {$p_0$};

\draw[cyan,-,ultra thick] (q0)--(q1);
\draw[cyan,-,ultra thick] (q1)--(q2);
\draw[cyan,-,ultra thick] (q1)--(q3);
\draw[cyan,-,ultra thick] (q3)--(q4);

\end{tikzpicture}
\label{fig:Ourense_layout}
}
\vspace{0.1cm}
\subfloat[Rigetti's Aspen-4 device graph]{
\includegraphics[width=\linewidth]{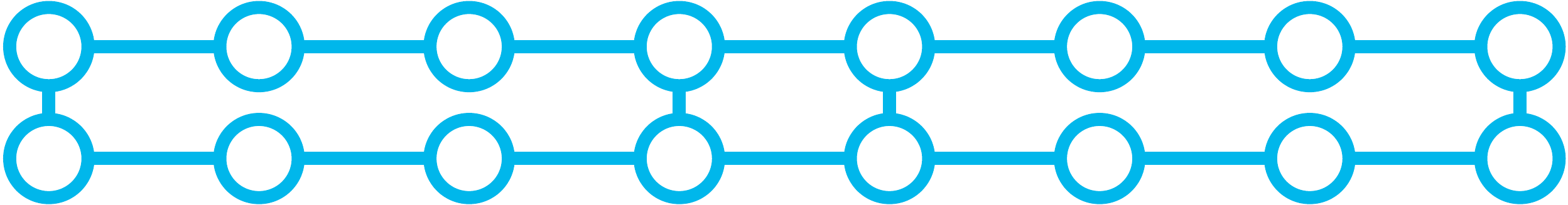}
}
\end{minipage}
\hspace{0.1cm}
\begin{minipage}[]{0.3\linewidth}
\subfloat[IBM's Tokyo device graph]{
    \includegraphics[width=\linewidth]{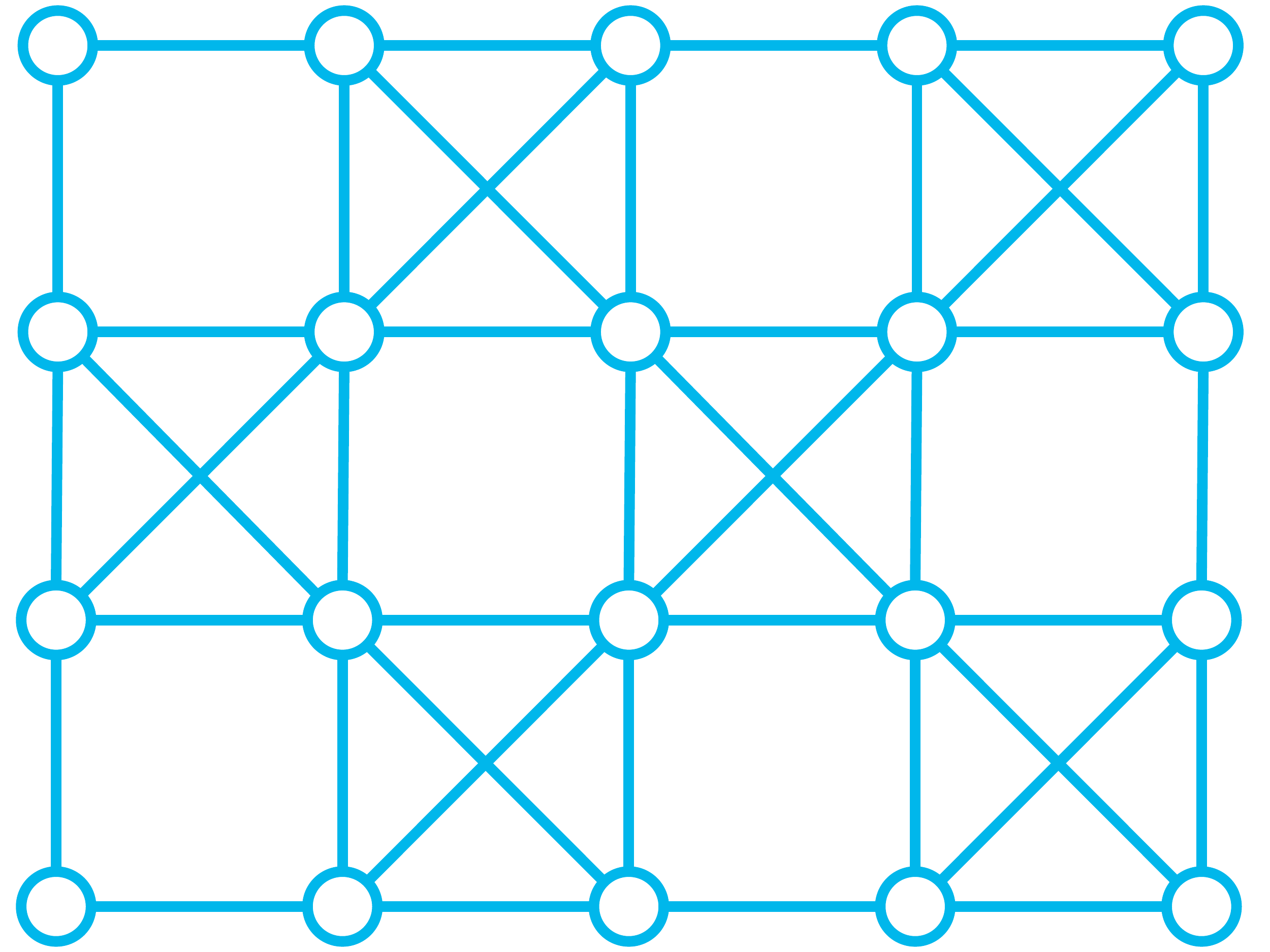}
    \label{fig:Tokyo_layout}
}
\end{minipage}
\vfill
\subfloat[IBM's Rochester device graph]{
    \includegraphics[width=0.3\linewidth]{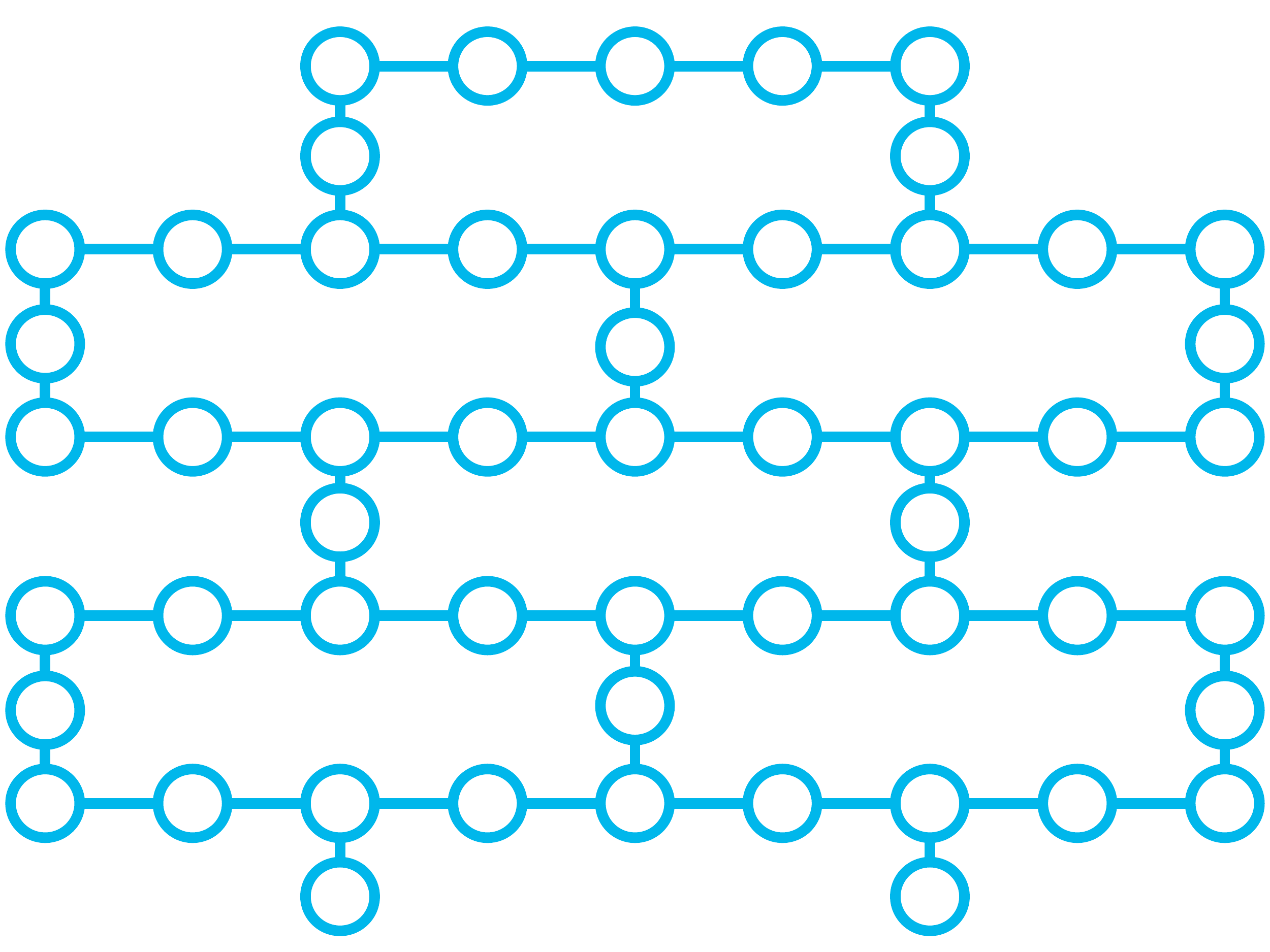}
}
\hspace{0.1cm}
\subfloat[Google's Sycamore device graph]{
    \includegraphics[width=0.3\linewidth]{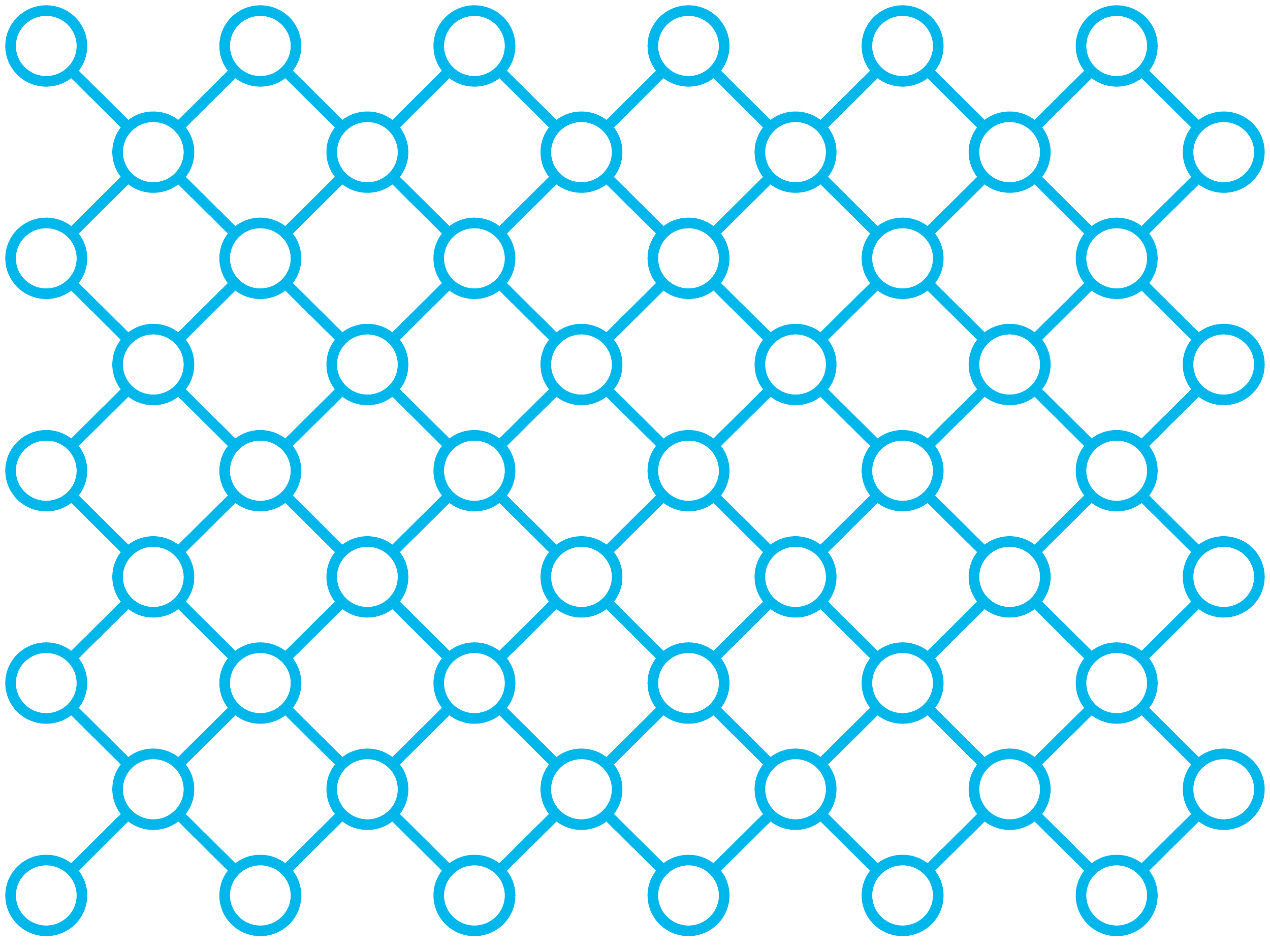}
}

\caption{\label{fig:layouts} QC device examples}
\end{figure}

\section{Problem Formulation}\label{sec:problem}
Layout synthesis is divided into two sub-tasks: {\it initial placement} that produces an initial mapping from logical qubits to physical qubits, and {\it gate scheduling} that decides when and where to execute each input gate and insert SWAP gates to make two-qubit gates legal on the device graph.
However, a SWAP gate may increase the depth, and definitely increase the gate count and the error rate.
In the whole QC compilation/mapping workflow, there are usually circuit optimization stages before or after layout synthesis.
During the optimization stages, gate reduction and commutation are performed, e.g.,~\cite{nam-2018-automated}.
We assume that these optimizations are already applied prior to layout synthesis, so that every input gate should be executed and the relative order of input gates should not change in the layout synthesis process.

\begin{figure}[htb]
    \centering
    \subfloat[Initial placement for Toffoli circuit]{
        \includegraphics[width=0.6\linewidth]{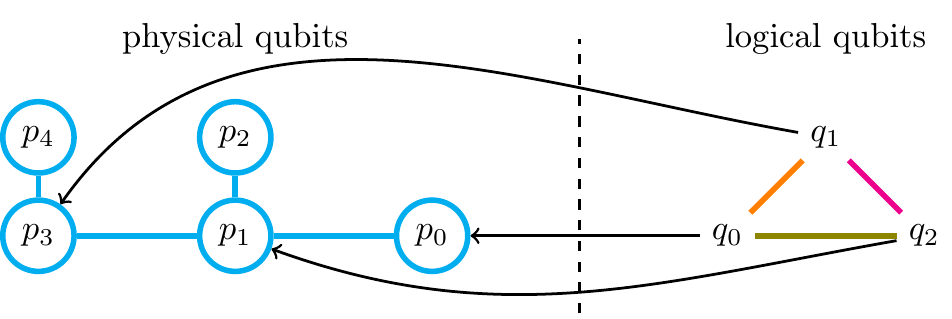}
        \label{fig:placement}
    }
    
    \subfloat[Scheduled Toffoli circuit]{
        \includegraphics[width=0.9\linewidth]{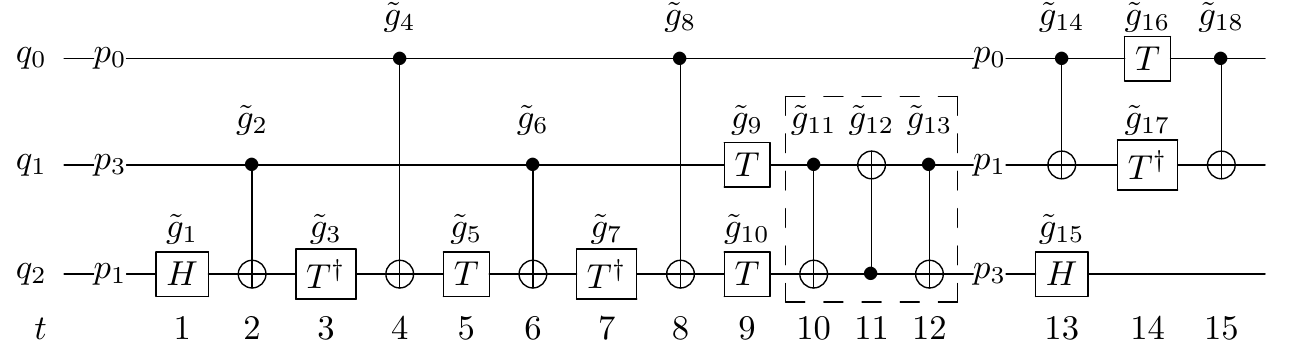}
        \label{fig:scheduling}
    }
    \caption{Quantum computing layout synthesis for Toffoli circuit on IBM's Ourense device}
    \label{fig:LayoutSynthesis}
\end{figure}

\subsection{Initial Placement}

During initial placement, we need to find a mapping from logical qubits in the quantum circuit to physical qubits on the device $\mu_0:Q\to P$ that benefits subsequent gate scheduling. 
If the two-qubit gate set, consisting of all the two-qubit gates, can be embedded in the device graph during initial placement (e.g., Fig.~\ref{fig:2qbg-set} can be embedded in Fig.~\ref{fig:Tokyo_layout}), gate scheduling does not necessarily need to insert any gates. 
However, the case is not so ideal in general. 
For example, if we want to map the Toffoli circuit as shown in  Fig.~\ref{fig:QCDiagrams_1D} onto device Ourense as shown in Fig.~\ref{fig:Ourense_layout}, there must be some additional gates, since the device graph does not contain any triangles, but the two-qubit gates in Fig.~\ref{fig:2qbg-set} forms a triangle. 
A valid initial placement in this case is given in Fig.~\ref{fig:placement} where $\mu_0(q_0)=p_0$, $\mu_0(q_1)=p_3$, and $\mu_0(q_2)=p_1$.

\subsection{Gate Scheduling}

Given a quantum circuit $g_1...g_M$, e.g., Fig.~\ref{fig:QCDiagrams_1D}, gate scheduling produces the {\it spacetime coordinates} $(t_j,x_j)$ for each gate. 
The coordinates specify when and where the gates are applied. 
We say that a gate is scheduled to {\it cycle} $t$ if its time coordinate is $t$.
For a single-qubit gate, the space coordinate is a physical qubit, i.e., $x_j\in P$; for a two-qubit gate, it is an edge in the device graph, i.e., $x_j\in E$. 
SWAP gates may need to be inserted during gate scheduling to ensure that all two-qubit gates are executable. 
The input gate plus the inserted SWAP gates constitute the scheduled gate list $\tilde g_1,..., \tilde g_{\tilde M}$.
Since only SWAP gates are inserted and all the input gates are contained in the scheduled gate list, the functionality of input circuit remains unchanged after the layout synthesis process. 
Additionally, gate scheduling must respect {\it dependencies} in the quantum circuit. 
If gate $g$ acts on qubit $q$, then $g$ can only be executed after all prior gates, which act on qubit $q$, are executed.

A valid but not necessarily optimal gate scheduling example is given in Fig.~\ref{fig:scheduling}.
The time coordinates for all the gates are displayed at the bottom, e.g., $t_1=1$ and $t_{15}=13$.
The space coordinates can be inferred from the mapping, e.g. $x_1=p_1$, $x_2=(p_3,p_1)$, $x_{14}=(p_0,p_1)$, and $x_{15}=p_3$.
There is an injective map from the original gates to the scheduled gates: $f(i)=i$ for $i=1$ to $10$ and $f(i) = i + 3$ for $i=11$ to $15$ such that $g_i=\tilde g_{f(i)}$ for $i=1$ to $15$.
The three $CX$ gates in the dashed box $\tilde g_{11}$, $\tilde g_{12}$, and $\tilde g_{13}$ constitute a SWAP gate.
The adjusted mapping is shown after the them.
The SWAP gate is inserted so that $\tilde g_{14}$ and $\tilde g_{18}$ are on connected qubits $p_0$ and $p_1$, thus executable.

\subsection{Formal Definition of (Depth-Optimal) Layout Synthesis Problem in Quantum Computing}

{\bf Input} A device graph $G=(P, E)$ and a list of quantum gates $g_1...g_M$ acting on logical qubit set $Q$. All the input gates are in the gate set of the device, e.g., a set from Table \ref{tab:gates}. Logical qubits are less or equal than physical qubits, i.e., $|Q|\le |P|$.

\noindent {\bf Output} An initial mapping $\mu_0:Q\to P$, and a scheduled quantum circuit consists of a new list of gates $\tilde g_1...\tilde g_{\tilde M}$, including SWAP gates, where each gate has a spacetime coordinate $(t_j, x_j)$. We use tilde to denote that a gate is scheduled from here on.

\noindent {\bf Constraints} 
\begin{itemize}
    \item {\bf Feasible two-qubit gates:} all the two-qubit gates in the scheduled circuit must be on two qubits connected in the device graph.
    Formally, for $j=1$ to $\tilde M$, if $|\tilde g_j|=2$, then $x_j\in E$.
    \item {\bf Executing all gates:} all input gates should be executed. 
    Formally, there is an injective map $f:\{1,...,M\}\to\{1,...,\tilde M\}$ such that $g_i = \tilde g_{f(i)}$ for $i=1$ to $M$.
    \item {\bf Respecting dependencies:} for $i,i'=1$ to $M$, if $i<i'$ and $g_i\cap g_{i'}\neq\emptyset$ then $t_{f(i)}<t_{f(i')}$.
\end{itemize}

\noindent {\bf Objective} Minimizing circuit {\it depth} $T$, which is the maximal time coordinate of all the scheduled gates, i.e., $T \equiv \max_{j=1,...,\tilde M} t_j$.

In this paper, we use depth as the default objective but other objectives can be used as well, e.g., the number of additional gates $\tilde M - M$, or the fidelity of the scheduled circuit. The output and the constraints of the problem are independent of the objective. However, with other objectives like fidelity, more input information may be required.

\section{Related Work}\label{sec:related}

In the most general sense, the task of QC layout synthesis is generating a quantum circuit that satisfies QC device constraints and fulfills the functionality of the input circuit. 
Related works on this problem include~\cite{whitney-2007-automated, maslov-2008-quantum, hirata-2011-efficient, shafaei-2013-optimization, shafaei-2014-qubit, kole-2016-heuristic, kole-2018-new, wille-2014-optimal, siraichi-2018-qubit, zulehner-2018-efficient, childs-2019-circuit, itoko-2019-quantum, wille-2019-mapping, siraichi-2019-qubit, cowtan-2019-qubit, bhattacharjee-2019-muqut, lye-2015-determining, venturelli-2017-temporal, venturelli-2018-compiling, booth-2018-temporal, zulehner-2019-compiling, kole-2019-improved, tannu-2019-all, li-2019-tackling, murali-2019-noise, murali-2019-full, ash-2019-qure}. 
These works may have some variations on the problem in mind. 
\cite{whitney-2007-automated, hirata-2011-efficient, shafaei-2013-optimization, shafaei-2014-qubit, lye-2015-determining, kole-2016-heuristic, kole-2018-new, murali-2019-noise} focus on multidimensional array device graphs (linear array for 1D, grid for 2D, and so on). 
\cite{zulehner-2019-compiling} focuses specifically on SU(4) circuits and includes post-synthesis optimization. 
\cite{ash-2019-qure} focuses on adjusting the mapping after synthesis to improve fidelity.

As pointed out in Sec.~\ref{sec:gates}, it is nontrivial to judge if two quantum gates commute.
If two gates do not commute, changing their relative order would result in a functionally different circuit.
Therefore, in general, performing commutations are seen to be part of logic synthesis for quantum computing, which happens before layout synthesis, the main topic of this paper.
Thus, to better isolate layout synthesis, we stated in Sec.~\ref{sec:problem} that, as an assumption, all commutation-based optimization have been performed beforehand.
This assumption is also respected in all of the evaluations afterwards and all the other related works listed above except~\cite{itoko-2019-quantum, venturelli-2017-temporal, venturelli-2018-compiling, booth-2018-temporal}.
\cite{itoko-2019-quantum} considers a few basic commutation rules, but the solutions it compares to do not use dependencies to encode the relative orders between the gates.
Thus, it is unclear whether the gain of~\cite{itoko-2019-quantum} is because of commutations, or encoding with dependencies.
\cite{venturelli-2017-temporal, venturelli-2018-compiling, booth-2018-temporal} consider the layout synthesis of QAOA circuits. 
All the two-qubit gates used in QAOA circuits~\cite{farhi-2014-qaoa} are $ZZ$-phase gates that commute with each other, which means the relative order of two $ZZ$-phase gates can be changed even if they act on a same qubit.
However, these gates rarely appear in other existing quantum circuits, so we decide to treat QAOA circuits as a special case instead of jeopardizing our efforts in isolating layout synthesis.

Existing works on quantum layout synthesis can be characterized by based on optimization metrics used. 
The metric can be the additional ``cost'', which is usually proportional to the number of additional gates~\cite{maslov-2008-quantum, hirata-2011-efficient, shafaei-2013-optimization, shafaei-2014-qubit, lye-2015-determining, kole-2016-heuristic, kole-2018-new, siraichi-2018-qubit, childs-2019-circuit, cowtan-2019-qubit,  itoko-2019-quantum, zulehner-2018-efficient, zulehner-2019-compiling, siraichi-2019-qubit, wille-2014-optimal, wille-2019-mapping}; or circuit depth like this paper, since the qubits can only function well within the decoherence time~\cite{whitney-2007-automated,  venturelli-2017-temporal, booth-2018-temporal, venturelli-2018-compiling, bhattacharjee-2019-muqut }; or circuit fidelity, since nowadays a common practice is executing a circuit multiple times and analysing the statistics of the results~\cite{tannu-2019-all, ash-2019-qure, murali-2019-noise, murali-2019-full}; or a mix of the above~\cite{li-2019-tackling, kole-2019-improved}.

Detailed discussions of the complexity of QC layout synthesis can be found in Sec.~\ref{sec:complexity}.
These discussions indicate that large scale instances cannot be solved both exactly and efficiently. 
From the perspective of solution techniques, the current works can be divided into two categories. 
The first group focuses on deriving the exact solution for moderate-sized instances with the help of solvers~\cite{bhattacharjee-2019-muqut, booth-2018-temporal, venturelli-2017-temporal, venturelli-2018-compiling, lye-2015-determining, wille-2019-mapping, wille-2014-optimal, shafaei-2014-qubit, murali-2019-noise, murali-2019-full}. \cite{lye-2015-determining, wille-2014-optimal} use a PBO (pseudo Boolean optimizer) to decide a set of SWAP gates that legalizes all the two-qubit gates, but they do not explicitly schedule the gates. The same goes for~\cite{shafaei-2014-qubit, wille-2019-mapping}, which use a MIP (mixed integer programming) solver and a SMT (satisfiability modulo theories) solver correspondingly. \cite{booth-2018-temporal, venturelli-2017-temporal, venturelli-2018-compiling} use a temporal planner to schedule specifically QAOA circuits. The closest previous works concerning this paper are~\cite{bhattacharjee-2019-muqut, murali-2019-noise, murali-2019-full}. However, \cite{murali-2019-noise, murali-2019-full} use a SMT solver to maximize fidelity. \cite{bhattacharjee-2019-muqut} splits circuits into ``levels'' and inserts gates to transform the mapping between the levels. This model of quantum circuit may not yield an optimal solution. Under this imperfect ``levels'' model, \cite{bhattacharjee-2019-muqut} aims to derive a depth-optimal solution with integer linear programming.

The second group of related works use heuristic search techniques~\cite{zulehner-2018-efficient, maslov-2008-quantum, whitney-2007-automated, kole-2019-improved, li-2019-tackling, hirata-2011-efficient, shafaei-2013-optimization, kole-2016-heuristic, kole-2018-new, siraichi-2018-qubit, childs-2019-circuit, cowtan-2019-qubit, itoko-2019-quantum, zulehner-2019-compiling, tannu-2019-all, siraichi-2019-qubit}. 
We only discuss the works targeting general device layouts below~\cite{zulehner-2018-efficient, maslov-2008-quantum, kole-2019-improved, li-2019-tackling, siraichi-2018-qubit, siraichi-2019-qubit,  childs-2019-circuit, cowtan-2019-qubit, itoko-2019-quantum, zulehner-2019-compiling, tannu-2019-all}.
The general approach is splitting the circuit into small sub-circuits for which the layout synthesis can be done efficiently, and then searching for the mapping transformation between these sub-circuits. 
A sub-circuit can be a ``level'' or ``layer'' mentioned in the last paragraph~\cite{zulehner-2018-efficient, siraichi-2019-qubit, childs-2019-circuit, li-2019-tackling, cowtan-2019-qubit, kole-2019-improved, tannu-2019-all}, a set of several levels~\cite{maslov-2008-quantum}, several levels but for a few specific qubits~\cite{zulehner-2019-compiling}, or individual gates~\cite{siraichi-2018-qubit, itoko-2019-quantum}.
In order to find the mapping transformation,~\cite{siraichi-2018-qubit} inserts SWAP gates to move the two qubits that required by the next two-qubit gate in the shortest path; \cite{cowtan-2019-qubit, itoko-2019-quantum, kole-2019-improved} also consider distances between qubits of further two-qubit gates; \cite{tannu-2019-all} additionally considers fidelity in the qubit movements; \cite{zulehner-2018-efficient, zulehner-2019-compiling} use the sum-of-distances plus the number of SWAP gates as the cost function in A* search; \cite{li-2019-tackling} uses bidirectional search; \cite{childs-2019-circuit} uses a 4-approximation algorithm; \cite{siraichi-2019-qubit} exploits some existing approximate solution of token swapping; \cite{maslov-2008-quantum} recursively considers SWAP gates as cuts in the device graph. 

The complexity also brings difficulty to the evaluation of these solutions. 
Currently, the benchmarks usually are quantum circuit libraries of some realistic functions, e.g., \cite{nam-2018-automated} and RevLib~\cite{wille-2008-revlib}, or certain random circuits, which are thought to be the worst-case scenario, e.g., SU(4) circuits~\cite{zulehner-2019-compiling}. 
So far, researchers can only compare against each other, but do not know how far they are from the optimum. 
This paper aims to fill in this gap.

The QC layout synthesis problem is still quite new to compiler and design automation communities, so the name of the problem varies. 
It can be placement~\cite{maslov-2008-quantum, shafaei-2014-qubit}, routing~\cite{cowtan-2019-qubit}, compiling quantum circuits~\cite{itoko-2019-quantum, zulehner-2019-compiling, murali-2019-noise, venturelli-2017-temporal, venturelli-2018-compiling, booth-2018-temporal, murali-2019-full}, quantum circuit transformation~\cite{childs-2019-circuit}, mapping circuits to QC architectures~\cite{zulehner-2018-efficient, wille-2014-optimal, lye-2015-determining, wille-2019-mapping, kole-2019-improved, bhattacharjee-2019-muqut, li-2019-tackling}, conversion~\cite{hirata-2011-efficient} or optimization~\cite{shafaei-2013-optimization} of circuits in QC architecture, realization of quantum circuits~\cite{kole-2016-heuristic, kole-2018-new}, or qubit allocation~\cite{siraichi-2018-qubit, tannu-2019-all, ash-2019-qure, siraichi-2019-qubit}.

\section{QUEKO Benchmarks}\label{sec:approach}
This work is inspired by PEKO~\cite{chang-2004-optimality}, placement examples with known optimal. 
Placement is a crucial step in classical integrated circuit design, where modules are placed on a chip with objective of minimizing total wirelength. 
Although this problem is NP-hard, the PEKO algorithm is able to generate benchmarks with know optimal solutions.

Similarly, for a generic input quantum circuit and a generic device graph, finding the scheduled circuit with optimal depth is NP-complete, which will be proved in Sec.~\ref{sec:complexity}. 
However, it is feasible to construct some benchmarks with known optimal solution. 
Given a target device graph $G$ and a target depth $T$, we can construct an depth-optimal circuit. 
Then, by re-labelling the qubits, we derive a QUEKO benchmark.

Additionally, QUEKO can be customized for a given feature: gate density vector $(d_1, d_2)$. 
The two components intuitively stand for the densities of single-qubit gates and two-qubit gates in the whole circuit. 
Suppose a circuit has $n$ logical qubits, $M_1$ single-qubit gates, $M_2$ two-qubit gates, and a longest dependency chain of $l$, then $d_1=M_1/(n\cdot l)$ and $d_2=2M_2/(n\cdot l)$.
For example, in Fig.~\ref{fig:QCDiagrams_1D}, $n=3$, $l=11$, $M_1=9$, and $M_2=6$, so $d_1\approx 0.27$ and $d_2\approx 0.36$.
Likewise, we can extract $(d_1, d_2)$ from other existing circuits with known functionalities, so that the QUEKO benchmarks imitate some real-world circuits and, at the same time, have known optimal depths.

The construction of QUEKO, as shown in Algorithm \ref{alg:construction}, starts with checking the validity of input data by calculating the number of single-qubit and two-qubit gates $M_1$ and $M_2$. 
If $M_1+M_2<T$, then there would be too few gates to generate a circuit with depth $T$; if $M_1+2M_2>N\cdot T$, then there would be too many gates for the given depth and device graph. 
We define the matching bound $u$ of a graph $G$ to be the minimal size of maximal matchings of $G$. 
This means we can find at least $u$ edges in $G$ that pair-wisely share no vertices.
If $M_2>u\cdot T$, then there could be too many two-qubit gates for the given depth and device graph.
In short, if $M_1+M_2<T$, $M_1+2M_2>N\cdot T$, or $M_2>u\cdot T$, we return an error to reject the input data.
Otherwise, we proceed to three phases: backbone construction, sprinkling, and scrambling. 

\begin{algorithm}[htbp]
\begin{algorithmic}[1]
\REQUIRE a device graph $G = (P, E)$ with $|P|=N$ and its matching bound $u$, a depth target $T$, and a gate density vector $(d_1, d_2)$
\ENSURE QUEKO benchmark $g_1g_2...g_{M_1+M_2}$, where $M_1$ and $M_2$ are the numbers of single-/two-qubit gates

\STATE $M_1\gets\lceil d_1\cdot N\cdot T\rceil$, $M_2\gets\lceil d_2\cdot N\cdot T/2\rceil$
\IF{$M_1+M_2< T$ \OR $M_1+2M_2>N\cdot T$ \OR $M_2>u\cdot T$}
    \RETURN error: input data not admissible
\ENDIF
\STATE $m_1 \gets 0$, $m_2\gets0\quad$ \COMMENT{how many single-qubit gates and two-qubit gates we have used}

\COMMENT {Backbone construction phase}
\FOR {$i=1$ to $T$}
    \STATE $j\gets\text{rand}(\{1, 2\})\quad$ \COMMENT{randomly decide single-qubit or two-qubit gate}
    \IF{$j=2$ \AND $m_2<M_2$}
        \STATE $x_i \gets \text{rand}(E)$
        \WHILE{$i>1$ \AND $x_i\cap x_{i-1}=\emptyset$}
            \STATE $x_i \gets \text{rand}(E)$ 
        \ENDWHILE
        \STATE $t_i\gets i$, $m_2\gets m_2+1$
    \ELSE
        \STATE $x_i \gets \text{rand}(P)$
        \WHILE{$i>1$ \AND $x_i\cap x_{i-1}=\emptyset$}
            \STATE $x_i \gets \text{rand}(P)$
        \ENDWHILE
        \STATE $t_i\gets i$, $m_1\gets m_1+1$
    \ENDIF
\ENDFOR

\COMMENT {Sprinkling phase}
\FOR {$i=T+1$ to $M_1+M_2$}
    \STATE $j\gets\text{rand}(\{1, 2\})$
    \IF{$j=2$ \AND $m_2<M_2$} 
        \STATE $(t_i, x_i) \gets \text{rand}(\{1,2,...,T\}\times E)$
        \WHILE{$\exists l\in\{1,...,i\}$ such that $t_i=t_l$ \AND $x_i \cap x_l \neq\emptyset$}
            \STATE $(t_i, x_i) \gets \text{rand}(\{1,2,...,T\}\times E)$
        \ENDWHILE
        \STATE $m_2\gets m_2+1$
    \ELSE
        \STATE $(t_i, x_i) \gets \text{rand}(\{1,2,...,T\}\times P)$
        \WHILE{$\exists l\in\{1,...,i\}$ such that $t_i=t_l$ \AND $x_i \cap x_l \neq\emptyset$}
            \STATE $(t_i, x_i) \gets \text{rand}(\{1,2,...,T\}\times P)$
        \ENDWHILE
        \STATE $m_1\gets m_1+1$
    \ENDIF
\ENDFOR

\COMMENT {Scrambling phase}
\STATE $\tau\gets$ a random mapping from $P$ to $Q$
\FOR{$i=1$ to $M_1+M_2$}
    \IF{$x_i=(p, p')\in E$}
        \STATE $g_i\gets\text{two-qubit gate}(\tau(x_i.p), \tau(x_i.p'))$
    \ELSE
        \STATE{$g_i\gets\text{single-qubit gate}(\tau(x_i))$}
    \ENDIF
\ENDFOR

\COMMENT{Output}
\STATE {\bf sort} $g_i$ according to $t_i$, $i=1$ to $M_1+M_2$
\RETURN $g_1g_2...g_{M_1 + M_2}$
\end{algorithmic}
\caption{QUEKO construction}
\label{alg:construction}
\end{algorithm}

\begin{figure}[htb]
\centering
\subfloat[Device graph]{
\includegraphics[width=0.3\linewidth, rotate=90]{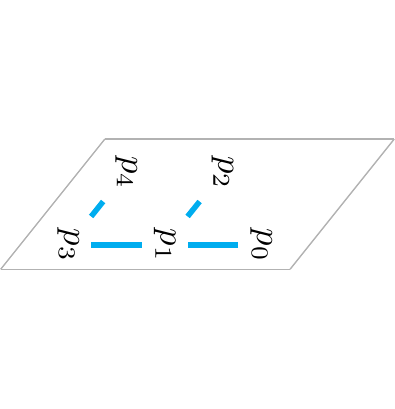}
\label{fig:Ourense_rotated}
}
\hspace{1em}
\subfloat[Backbone construction phase]{
\includegraphics[width=0.3\linewidth, rotate=90]{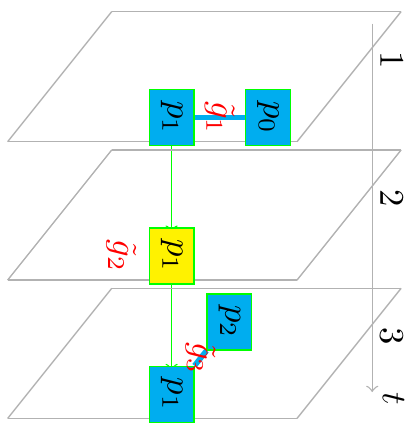}
\label{fig:Construction_Grow}
}

\subfloat[Sprinkling phase]{
\includegraphics[width=0.3\linewidth, rotate=90]{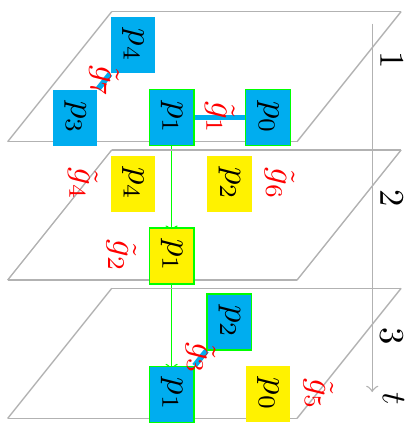}
\label{fig:Construction_Sprinkle}
}
\hspace{1em}
\subfloat[Scrambling phase]{
\includegraphics[width=0.3\linewidth, rotate=90]{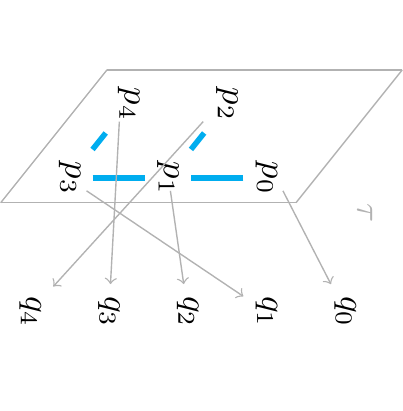}
\label{fig:Construction_Shuffle}
}

\begin{minipage}{0.3\linewidth}
\subfloat[Output circuit 1D diagram]{
\includegraphics[width=\linewidth]{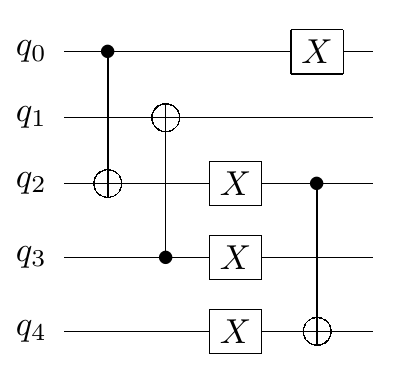}
\label{fig:Construction_Output}
}
\end{minipage}
\hspace{1em}
\begin{minipage}[b]{0.3\linewidth}
\subfloat[Output QASM code (Inside the parentheses are the corresponding gates before the scrambling phase.)]{
\lstinputlisting{qasm.txt}
\label{fig:Construction_qasm}
}
\end{minipage}

\caption{QUEKO construction visualization}
\label{fig:Construction}
\end{figure}

\subsection{Backbone Construction Phase}

This phase ``grows'' a sequence of $T$ gates, each depending on the previous one, constituting a dependency chain of length $T$. 
This chain serves as the ``backbone'' of the circuit.
For example, we start from the device graph as Fig.~\ref{fig:Ourense_rotated} (which is just Fig.~\ref{fig:Ourense_layout} rotated), and pick three executable gates $\tilde g_1$, $\tilde g_2$, and $\tilde g_3$ whose spacetime coordinates are $(1, (p_0, p_1))$, $(2, p_1)$, $(3, (p_1, p_2))$. 
They constitute a dependency chain of length $T=3$, since all of them act on $p_1$.
This is shown in Fig.~\ref{fig:Construction_Grow}, where gates at different cycles are put on different ``slices'' from left to right. 
The ``backbone'' is colored green.
Note that if the given graph is directed, then we make sure that the direction of the two-qubit gate we choose is consistent with the direction of the corresponding edge.
Thus, our construction also works for directed device graphs.

To be more rigorous, we first choose a random node or edge of $G$ as $x_1$.
In every iteration afterwards, we randomly choose $x_k$ that overlaps with $x_{k-1}$.
Thus, $x_k\cap x_{k-1}\neq\emptyset$, which enforces $t_k>t_{k-1}= k-1$ by dependency constraint.
On the other hand, since $\tilde g_k$ is executable, it can at most take a single cycle, i.e., the optimal $t_k=t_{k-1} + 1=k$.
Gate sequence $\tilde g_1, \tilde g_2,..., \tilde g_T$ constitutes a dependency chain of length $T$.
Because of this ``backbone'', the final depth of the scheduled circuit cannot be lower than $T$.
Note that we do not need to use any SWAP gates for backbone construction.

\subsection{Sprinkling Phase} 

The backbone construction phase uses $T$ gates in total, we then randomly ``sprinkle'' the rest $M_1+M_2-T$ gates, e.g., $\tilde g_4,\tilde g_5, \tilde g_6, \tilde g_7$ shown in Fig.~\ref{fig:Construction_Sprinkle}.
We randomly select spacetime coordinates $(t_i, x_i)$, ($1\le t_i \le T$) that does not overlap with any existing gates with time coordinate $t_i$.
Since all the time coordinates of gates sprinkled are less or equal to $T$, the backbone is not lengthened in this phase.
After sprinkling, a circuit with gates $\tilde g_1...\tilde g_{M_1+M_2}$ is created.
Its gates are all executable; its depth is $T$; its gate density vector approximates $(d_1, d_2)$.
(There could be minor rounding errors in the ceiling function.)

It is worthy of noting that though only one longest dependency chain is explicitly generated in the backbone construction phase, the sprinkling phase may implicitly generate more.
For example, $\tilde g_4$ depends on $\tilde g_7$; if we ``sprinkles'' a gate on $p_4$ at cycle $3$, then another dependency chain of length $3$ would exist in the output circuit.
The higher the gate densities, the more likely that these implicit longest dependency chains are generated.

\subsection{Scrambling Phase}

As shown in Fig.~\ref{fig:Construction_Shuffle}, we generate a random mapping $\tau$ from physical qubits to logical qubits and apply $\tau$ to the space coordinates of $\tilde g_1\tilde g_2...\tilde g_{M_1+M_2}$.
For instance, $x_1=(p_0, p_1)$, so the resulting gate $g_1$ is a two-qubit gate on logical qubits $\tau(p_0)=q_0$ and $\tau(p_1)=q_2$; $x_7=(p_3, p_4)$, so $g_7$ is a two-qubit gate on $\tau(p_3)=q_1$ and $\tau(p_4)=q_3$; $g_6$ is a single-qubit gate on $\tau(p_2)=q_4$...
The specific types of single-qubit gates and two-qubit gates are not important, since QUEKO is only for layout synthesis, not for circuit optimization.
We use $X$ as the single-qubit gate and $CX$ as the two-qubit gate.

\subsection{Output}
Sort the gates $g_1g_2...g_{M_1+M_2}$ according to the time coordinates to transfer the timing information originally in these time coordinates to the relative order inside the output gate list.
The result is a QUEKO benchmark, as shown in Fig.~\ref{fig:Construction_Output} and Fig.~\ref{fig:Construction_qasm}.

As we have proven, the depth of the output circuit is at least $T$ because of the backbone. 
A QC layout synthesis tool can meet the optimum by finding the initial mapping that is the inverse of the scrambling mapping $\tau$. 
Therefore, QUEKO circuits have known optimal depth $T$.

Note that QUEKO circuits also have known optimal gate count $M_1+M_2$. 
Since we assume that, in layout synthesis, all the input gates need to be executed, the result produced by the tools has at least as many gates as the QUEKO circuit. 
The optimal gate count is also met with the optimal initial mapping $\tau^{-1}$, since no SWAP gates are needed in this case.

\section{Experiment}\label{sec:experiment}
\subsection{Experimental Setup}

To evaluate QC layout synthesis tools with QUEKO, device graphs, depths and sizes, and gate density vectors are required. 
We specify the choice of these parameters and the choice of tools to evaluate in this subsection.
Because of the randomness in our construction, we generate ten QUEKO benchmarks for each triplet of the parameters.
These benchmarks are made open-source\footnote{\label{nt:queko}\url{https://github.com/UCLA-VAST/QUEKO-benchmark}} under the BSD license.
For evaluation, we fed each one of these benchmarks to four layout synthesis tools, as shown in Fig.~\ref{fig:flow}.
All the experiments were run on a Ubuntu 16.04 server, which has two Intel Xeon E5-2699v3 as CPUs and 128GB main memory.\footnote{We did not compare the runtime of these four tools, as some of them are implemented in Python and others in C. The inherent efficiency difference of programming languages renders a direct runtime comparison unfair. More fundamentally, since we use a noiseless model of quantum computers in this paper, layout synthesis would be required only once, so performance rather than runtime is the major concern. For other studies on mitigating changing noise properties of quantum devices, the runtime would be more important.}

\subsubsection{Device Graph}

We used representative devices from three different QC hardware providers. 
Sycamore from Google~\cite{arute-2019-quantum}, Tokyo and Rochester from IBM\footnote{\url{https://quantum-computing.ibm.com/}}, and Aspen-4 from Rigetti\footnote{\url{https://www.rigetti.com/qpu}}. 
The graphs of these devices are shown in Fig.~\ref{fig:layouts}. 
Sycamore has 54 qubits, of which 53 are active; Rochester also has 53 qubits. 
Both of them are state-of-the-art devices, but Sycamore has richer connectivity. 
Aspen-4 has 16 qubits, and Tokyo has 20 qubits. They are both highly competitive devices, but Tokyo has greater connectivity. 

Also, we have only listed superconducting devices because they are by far the most advanced QC devices. 
This does not mean that QUEKO cannot generalize to other technologies such as quantum dot\footnote{\url{https://sqc.com.au}} because our approach is valid as long as the basic quantum gates of this technology are single-qubit and two-qubit gates.

\subsubsection{Depth and Size} 
We constructed two sets of benchmarks with different depth ranges.
The corresponding size of these benchmarks can be deduced from the depth and the gate density vector, as shown in Algorithm \ref{alg:construction}.
The first set has depths from 5 to 45, which is the {\it near-term feasible benchmarks} (B\textsubscript{NTF}). 
In fact, one of the largest quantum circuits executed nowadays has depths 41~\cite{arute-2019-quantum}, which is about the same with the upper bound of B\textsubscript{NTF}. 
We intended to find out the layout synthesis performance within the current execution capacity.
The sizes of B\textsubscript{NTF} benchmarks range from 1136 to 34506 quantum gates.
The second set of benchmarks, denoted as B\textsubscript{SS} has depth from 100 to 900 which are {\it benchmarks for scaling study}. 
B\textsubscript{SS} represents the performance of these tools when the decoherence time of QC device improves in the future.
The sizes of B\textsubscript{SS} benchmarks range from 37 to 1727 quantum gates.

\subsubsection{Gate Density Vector}
We picked two special gate density vectors in the experiment: $(0.51, 0.4)$ based on the quantum circuits used in Google's quantum supremacy experiment~\cite{arute-2019-quantum}, denoted ``QSE'' below, and $(0.27, 0.36)$ based on the Toffoli circuit, denoted ``TFL'' below. 
It is beneficial to study QSE, since it is the only circuit so far with which experimental QC has shown a clear advantage. 
We chose the TFL because existing QC logic synthesis algorithms are based largely on reversible logic synthesis, which uses TFL as a fundamental element~\cite{shende-2003-synthesis}. 
Some B\textsubscript{NTF} have TFL density and others have QSE density.
All B\textsubscript{SS} have QSE density.
We also swept through possible gate density vectors and generated {\it benchmarks for impact of gate density} (B\textsubscript{IGD}).

\subsubsection{Layout Synthesis Tools}
\begin{figure}[htb]
    \centering
    \includegraphics[width=0.6\linewidth]{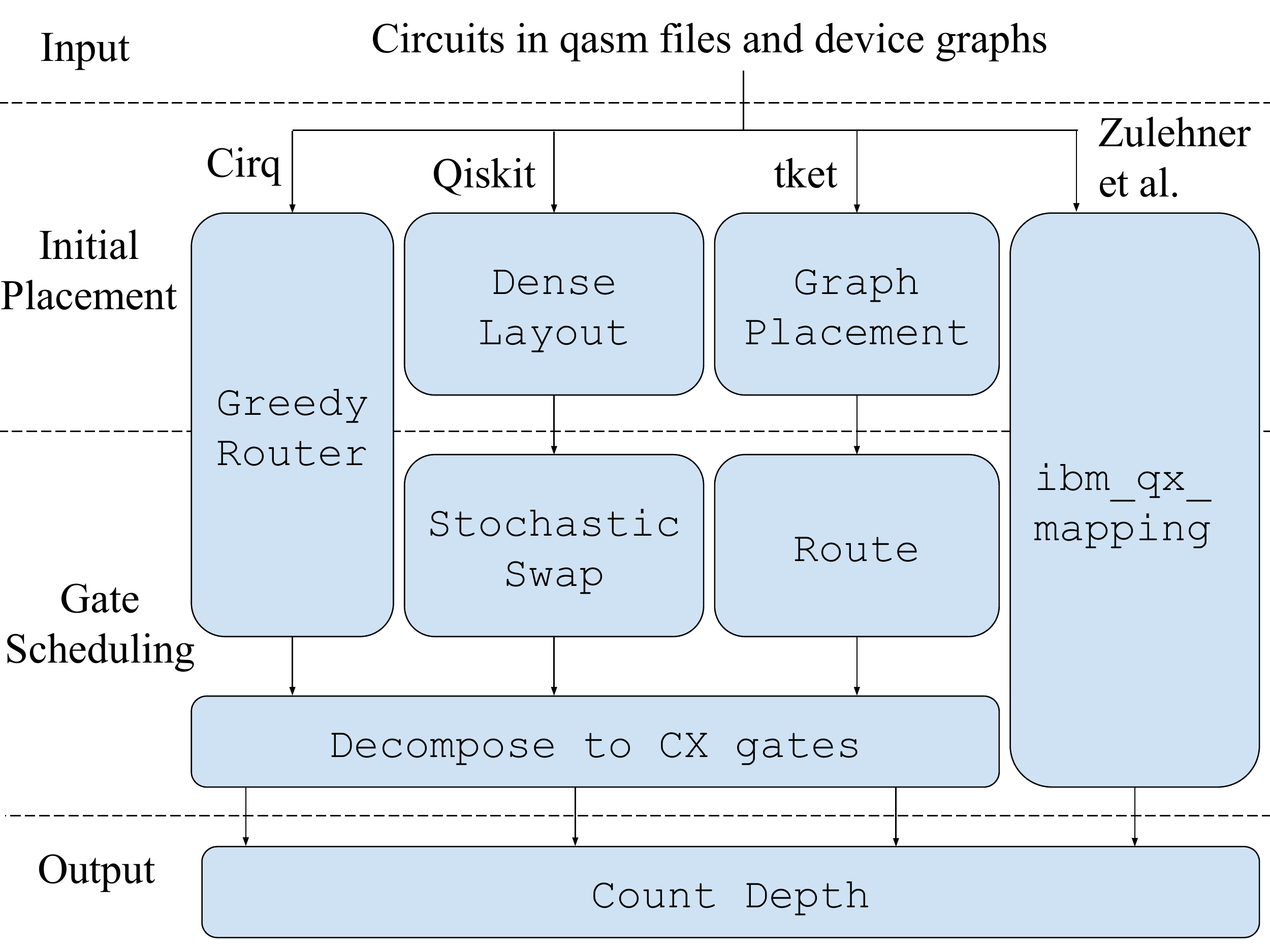}
    \caption{Workflow of the experiments}
    \label{fig:flow}
\end{figure}

Currently, Google, IBM, and Rigetti are considered front-runners of superconducting QC. 
Inside their QC frameworks (Cirq, Qiskit, and pyQuil), there are tools for layout synthesis. 
Unfortunately, pyQuil does not provide options to breakdown the whole compilation into optimization and layout synthesis, so pyQuil was excluded from the experiments. 
We also included a recent academic work from Zulehner et al.~\cite{zulehner-2018-efficient}, which is open-source.\footnote{\url{https://github.com/iic-jku/ibm_qx_mapping}}

We used \verb|greedy| router in Cirq version 0.6.0 as one of the layout synthesis tools, as shown in Fig.~\ref{fig:flow}.
So far, only one router named \verb|greedy| has been released, which contains an initial placement policy and a SWAP insertion policy based on heuristic search.
Note that \verb|greedy router| does not transform the gates into gates that are implementable on Google devices in Table \ref{tab:gates}, so the resulting circuit it produces contains the original input gates and SWAP gates it inserts.
For a fair comparison of depth, in all the experiments, we decompose SWAP gates inserted by the tools to three $CX$ gates, like in Fig.~\ref{fig:scheduling}.

Qiskit offers the most precise control over the so-called ``transpiler''. 
The transpilation is divided into individual passes, and users can define their own ``pass manager'' to make use of various transpiling modules that are offered. 
For the layout synthesis problem, there are \verb|Layout| modules generating initial mapping and \verb|Swap| modules inserting SWAP gates to the circuit to enable two-qubit gates.
Among the various combinations, we chose \verb|DenseLayout| and \verb|StochasticSwap| as shown in Fig.~\ref{fig:flow}, which seemed to have the best overall performance. 
\verb|DenseLayout| maps the logical qubits to an area on the device graph with dense connections.
\verb|StochasticSwap| perturbs the distance matrix of physical qubits and performs heuristic search for SWAP gates.
The version of Qiskit in the experiments is 0.14.1.

Another highly competitive router, $\mathsf{t}|\mathsf{ket}\rangle$, comes from Cambridge Quantum Computing. 
\verb|Graph Placement| uses graph monomorphism to derive initial mapping.
\verb|Route| performs heuristic search for SWAP gates.
We used $\mathsf{t}|\mathsf{ket}\rangle$ version 0.4.1 in the experiments.

Since all the tools evaluated use heuristics at some stages, sub-optimality is expected. 
Note that we only use default setup on all the modules in all the tools.
Changing setup parameters in some of these modules specifically for QUEKO benchmarks may lead to better performance in the following experiments, but may lead to worse performance on other circuits.

\subsection{Experimental Results}

\subsubsection{Performance on B\textsubscript{NTF}}

\begin{figure}[htb]
\centering
\subfloat[Smaller device (Aspen-4), sparser circuits (TFL)]{
\includegraphics[width=0.45\linewidth]{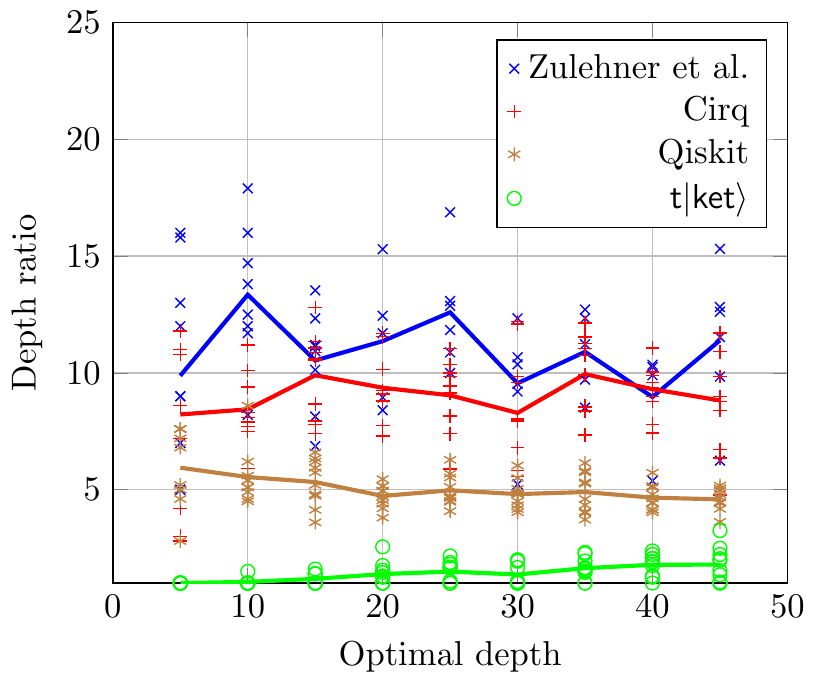}
\label{fig:aspen-rev}
}
\hfill
\subfloat[Larger device (Sycamore), denser circuits (QSE)]{
\includegraphics[width=0.45\linewidth]{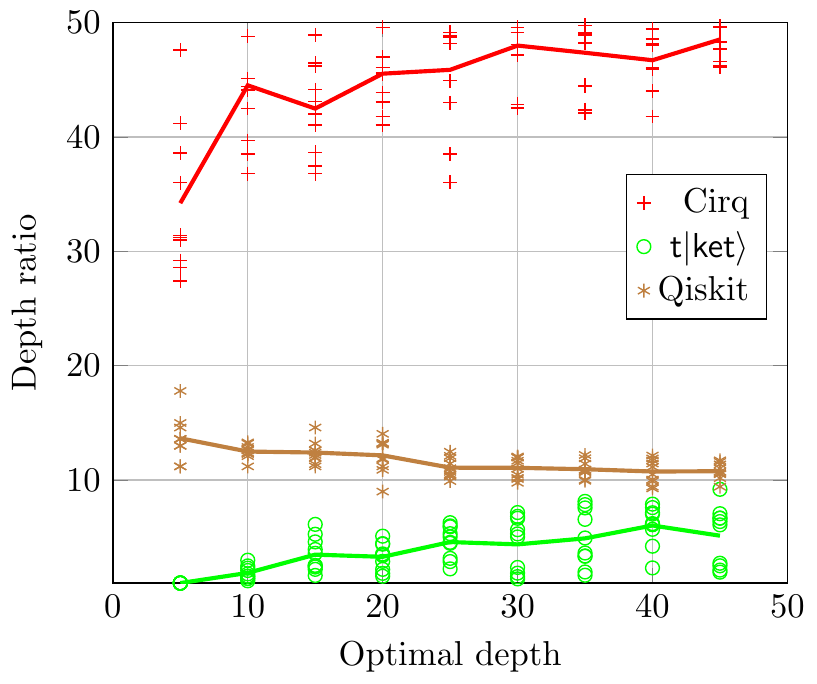}
\label{fig:Sycamore_Supremacy}
}
\caption{Performance of QC layout synthesis tools on B\textsubscript{NTF} (Lines are average.)}
\label{fig:PerformanceRealistic}
\end{figure}

In Fig.~\ref{fig:PerformanceRealistic}, the horizontal axis is the optimal depth and the vertical axis is the depth ratio, which is the depth of layout synthesis result divided by the optimal depth $T$. 
In the case of a smaller device (Aspen-4) and sparser circuits (TFL), the optimality gap on average is about 12x for~\cite{zulehner-2018-efficient}, 10x for Cirq, 5x for Qiskit and 1.5x for $\mathsf{t}|\mathsf{ket}\rangle$. 
In the case of a larger device (Sycamore) and denser circuits (QSE), the optimality gap on average is about 11x to 14x for Qiskit. 
The optimality gaps of Cirq and $\mathsf{t}|\mathsf{ket}\rangle$ grow with depth correspondingly from 35x to 50x and from 1x to 5x. 
Zulehner et al. is not in the Fig.~\ref{fig:Sycamore_Supremacy}, because for the larger device, it took so much memory that the operating system constantly killed it before finishing. 
This also happens sometimes for the smaller device experiments, so there are less blue data points than the other types of points in Fig.~\ref{fig:aspen-rev}.

\subsubsection{Performance on B\textsubscript{SS}}

\begin{figure}[htb]
\centering
\subfloat[$\mathsf{t}|\mathsf{ket}\rangle$ scaling behavior]{
\includegraphics[width=0.45\linewidth]{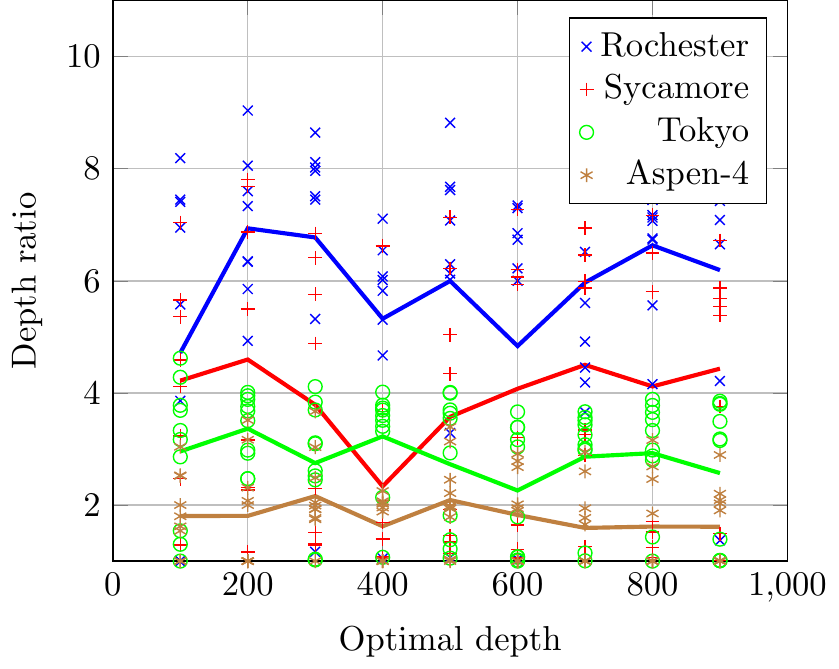}
\label{fig:pytket_scaling}
}
\hfill
\subfloat[Qiskit scaling behavior]{
\includegraphics[width=0.45\linewidth]{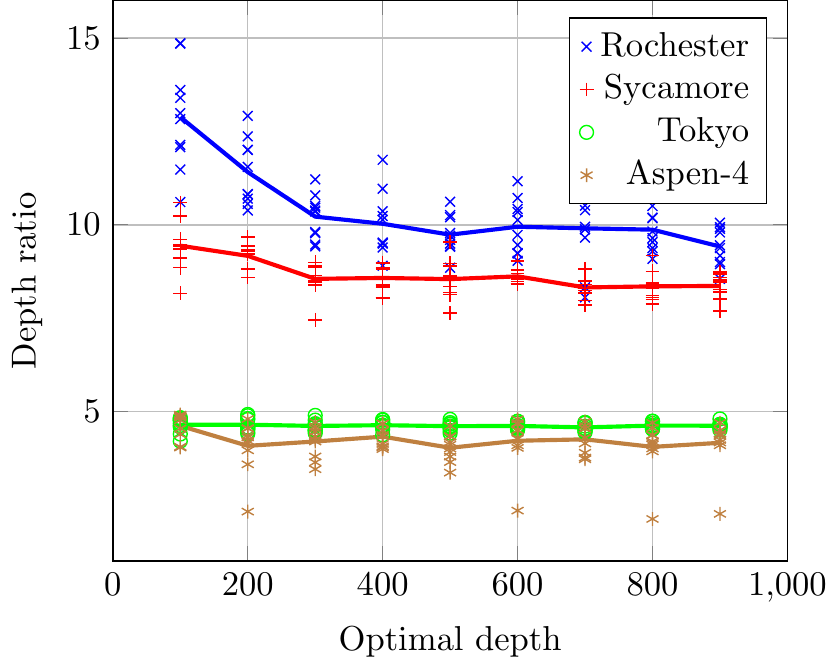}
\label{fig:qiskit_scaling}
}
\caption{$\mathsf{t}|\mathsf{ket}\rangle$ and Qiskit performance on B\textsubscript{SS} (Lines are average.)}
\label{fig:scaling}
\end{figure}

We studied further scaling of $\mathsf{t}|\mathsf{ket}\rangle$ and Qiskit on different devices as shown in Fig.~\ref{fig:scaling}.
The optimality gaps on average by $\mathsf{t}|\mathsf{ket}\rangle$ are about 5x to 7x for Rochester, 3x to 4x for Sycamore, 3x for Tokyo, and 2x for Aspen-4.
Note that for a fixed depth and a fixed device, the optimality gaps by $\mathsf{t}|\mathsf{ket}\rangle$ varies rather widely.
$\mathsf{t}|\mathsf{ket}\rangle$ managed to find the optimal mappings for some QUEKO benchmarks.
In general, as the depth increases, the depth ratio by Qiskit decreases at first and then converges to a value. 
The reason for this phenomenon may be that as the circuit deepens, the influence of initial placement gets smaller than the influence of SWAP insertion. 
For Qiskit, the optimality gaps on Rochester decreased from 13x to 10x on average; on Sycamore, Tokyo and Aspen-4 are about 8x, 5x, and 4x on average.
It can be seen that larger devices (Rochester and Sycamore) bring about larger optimality gaps. 
If the number of physical qubits are close, then richer connectivity (Sycamore versus Rochester) brings about smaller optimality gaps.

\subsubsection{Performance on B\textsubscript{IGD}}

\begin{figure}[htb]
\centering
\subfloat[$\mathsf{t}|\mathsf{ket}\rangle$ performance in depth ratio]{
\includegraphics[width=0.45\linewidth]{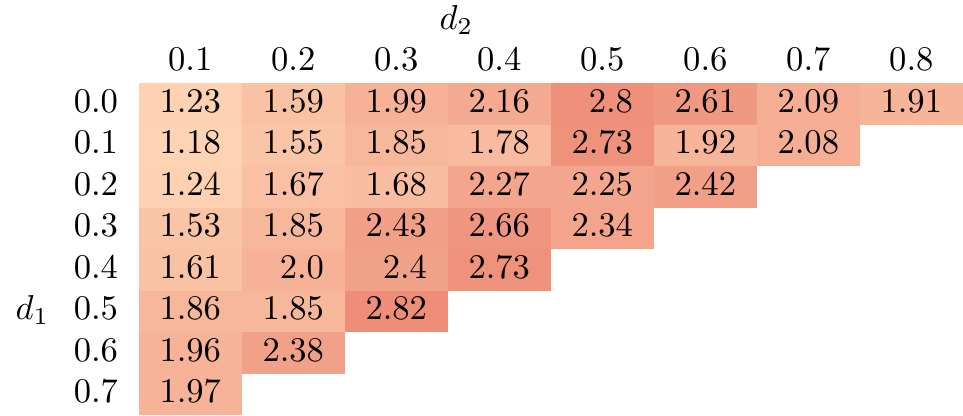}
\label{fig:pytket_heatmap}
}
\hfill
\subfloat[Qiskit performance in depth ratio]{
\includegraphics[width=0.45\linewidth]{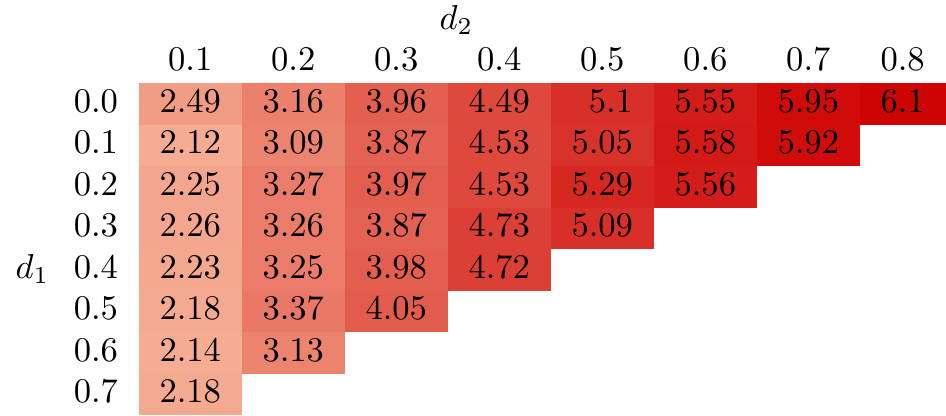}
\label{fig:qiskit_heatmap}
}
\caption{$\mathsf{t}|\mathsf{ket}\rangle$ and Qiskit performance on B\textsubscript{IGD} (Data are 10-time average.)}
\label{fig:gatedist}
\end{figure}

To better understand the impact of gate density on layout synthesis performance, we fixed the device to Tokyo and the depth to 45, and swept through possible gate densities. 
The results are shown in Fig.~\ref{fig:gatedist}. 
Fixing a column, the single-qubit gate density increases as we go down, Qiskit seems to be rather insensitive to this change, which is sensible since the single-qubit gates do not induce difficulty in layout synthesis. 
However, $\mathsf{t}|\mathsf{ket}\rangle$ is still sensitive to this change. 
Both tools are more sensitive to the change in the horizontal direction than in the vertical direction. 
Since the challenge to layout synthesis comes mainly, if not solely, from the two-qubit gates, this result is expected.
The depth ratio of $\mathsf{t}|\mathsf{ket}\rangle$ decreases when the two-qubit gate density is very high.
This is because when the circuit is dense with two-qubit gates, graph monomorphism algorithms can extract more information from the first few layers to narrow down better initial mappings.

\section{Complexity}
\label{sec:complexity}
Seeing the large optimality gap, it is natural for us to investigate the computational complexity of the depth-optimal QC layout synthesis problem, which was unknown till this point. 
Several related results are shown, e.g., determining the minimal number of SWAP gates to insert is NP-complete.
\cite{siraichi-2018-qubit} proves this theorem by reduction from subgraph isomorphism problem.
\cite{maslov-2008-quantum} proves the NP-completeness of depth-optimal initial placement where SWAP gates are not allowed.
The NP-completeness of depth-optimal QC layout synthesis for QAOA circuits is proven in~\cite{botea-2018-complexity} by reduction from 3-SAT. 
In this section, we prove this for general quantum circuits by reduction from Hamiltonian cycle problem, as Theorem~\ref{thm} states.

\begin{theorem} \label{thm}
Depth-optimal QC layout synthesis is NP-complete.
\end{theorem}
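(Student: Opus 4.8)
The plan is to prove both membership in NP and NP-hardness, working with the natural decision version of the problem: given a device graph $G=(P,E)$, an input circuit $g_1\ldots g_M$, and a bound $T$, decide whether there exist an initial mapping and a scheduled circuit satisfying the three constraints of Sec.~\ref{sec:problem} with depth at most $T$.

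First I would establish membership in NP. The key observation is that the optimal depth is always polynomially bounded: routing the two qubits of each gate together one edge at a time costs $O(\mathrm{diam}(G))=O(N)$ SWAP layers per gate, so there is always a valid solution of depth $O(MN)$. Hence if $T$ exceeds this bound the instance is trivially a yes-instance, and otherwise $T\le O(MN)$, so any depth-$\le T$ schedule has at most $N\cdot T=\mathrm{poly}(M,N)$ gates and forms a polynomial-size certificate. Verifying a certificate --- checking that every two-qubit gate sits on an edge, that the injective map $f$ covers all input gates, that dependencies are respected ($t_{f(i)}<t_{f(i')}$ whenever $i<i'$ and $g_i\cap g_{i'}\neq\emptyset$), and that $\max_j t_j\le T$ --- is plainly polynomial, so the problem lies in NP.

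Second, for NP-hardness I would reduce from the Hamiltonian cycle problem. Given a graph $H=(V,E_H)$ with $|V|=n$, I would take the device graph $G$ to be $H$ (possibly augmented with a small fixed gadget) and build a circuit whose only depth-$T$ schedules correspond to routing one distinguished logical qubit once around a Hamiltonian cycle of $G$. Concretely, the circuit would be a dependency chain of two-qubit gates forcing a single ``messenger'' qubit to interact in turn with each of the $n-1$ stationary qubits, with the target depth $T$ chosen so tightly that between consecutive gates the messenger may perform exactly one SWAP, i.e.\ advance by exactly one edge, and a closing gate forces it back to its start. Thus a schedule of depth $T$ exists iff the messenger can visit all $n$ vertices using single-edge steps and return to its origin, which is exactly a Hamiltonian cycle in $G$. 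The whole instance is clearly constructible in polynomial time.

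The main obstacle is the soundness (``only if'') direction: showing that \emph{no} depth-$T$ schedule exists when $H$ has no Hamiltonian cycle. Completeness is easy, since a Hamiltonian cycle directly yields the intended one-edge-per-step routing. The difficulty is ruling out all clever alternatives that might beat the budget without tracing a simple cycle --- the synthesizer could move several qubits in parallel, reorder independent gates, or use SWAPs that relocate the stationary partners rather than the messenger. The crux is therefore a tight depth lower bound that pins the schedule down: one must make the stationary qubits genuinely immobile (e.g.\ tied into enough interactions that relocating them would itself overflow the budget) and give a careful accounting showing each unit of depth buys at most one edge of progress along the required traversal. Making this accounting airtight, so that the combinatorial task left to the synthesizer is provably equivalent to finding a Hamiltonian cycle, is where the real work lies.
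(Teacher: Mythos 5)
Your NP-membership argument is sound, and in fact more careful than the paper, which only carries out the hardness reduction and asserts completeness; bounding the optimal depth by $O(MN)$ via one-edge-at-a-time routing to get a polynomial certificate is exactly right. The genuine gap is in your hardness reduction, and you name it yourself: the soundness (``only if'') direction of the messenger construction is left unproved, and it is not a routine verification --- it is the entire content of the reduction. As stated, nothing stops the scheduler from moving the ``stationary'' qubits toward the messenger instead of the reverse; a SWAP advances two tokens at once, disjoint SWAPs execute in parallel within a single cycle, and independent gates can be reordered, so a naive ``one unit of depth buys one edge of progress'' charge does not hold. Your proposed fix --- tying the stationary qubits into enough extra interactions that relocating them overflows the budget --- is itself an unconstructed gadget whose depth accounting would have to be established from scratch, with the same adversarial scheduler to rule out. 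Until that lower bound is proved, you have only the easy completeness direction.

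It is instructive that the paper closes exactly this hole by a construction that eliminates motion altogether rather than metering it. Taking $G_H=(V_H,E_H)$ with $|V_H|=N$ as the device graph and $N$ as the depth budget, the circuit consists of $N$ levels, where level $l$ contains the single two-qubit gate $CX(q_l, q_{(l+1)\operatorname{mod} N})$ and single-qubit gates on \emph{all} remaining logical qubits. Because every qubit is occupied at every level, each gate in level $l$ sits at the end of a dependency chain of length $l$, so a schedule of depth $N$ must place level $l$ exactly at cycle $l$ with zero slack; no SWAP can be inserted anywhere, the mapping $\mu_0$ is static throughout, and the ring of $CX$ gates is executable iff $\mu_0(q_1),\mu_0(q_2),\ldots,\mu_0(q_N),\mu_0(q_1)$ traces a Hamiltonian cycle in $G_H$. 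The soundness argument thus collapses to a two-line dependency count, with no accounting over parallel qubit movements needed. Your own remark about making the stationary qubits ``genuinely immobile'' points in this direction; pushed to its conclusion, the messenger becomes unnecessary --- one makes the required \emph{interaction pattern} itself the Hamiltonian cycle and chooses the budget so tight that the identity routing is the only routing.
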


\begin{proof}
The original QC depth-optimal layout synthesis problem is not easier than its decision version: input and constraints remain the same; but output whether the depth of the scheduled circuit can be lower or equal to $T$.
Inspired by~\cite{maslov-2008-quantum}, we show that the problem of determining whether a Hamiltonian cycle exists in a graph is reducible to the QC depth-decision layout synthesis problem. 
The former is NP-complete, so the latter is also NP-complete.

Suppose the graph for the Hamiltonian cycle problem is $G_H=(V_H, E_H)$, where $|V_H|=N$. 
We construct a depth-decision QC layout synthesis problem using $G_H$ as the device graph and $N$ as the target depth. 
The input circuit is $N$ ``levels'' of gates. 
Level $l$ contains a two-qubit gate $g_l=CX(q_l, q_{(l+1) \operatorname{mod} N})$. 
All the other logical qubits at level $l$ are fully occupied by single-qubit gates.

If there exists a Hamiltonian cycle in $G_H$, $(p_1, p_2,..., p_N,p_1)$, then let the initial mapping be $\mu_0: q_i\mapsto p_i$ for $i=1$ to $N$. 
It is easy to see that, with this mapping, all the gates in the constructed circuit can be executed. 

On the other hand, if there exists a scheduled circuit with depth within $N$, we first claim that the mapping cannot change during the execution of the circuit. 
Every gate in a level depends on some gate in the last level. 
So every gate in level $l$ has a dependency chain of length $l$, which is the earliest cycle it can be scheduled to. 
This means, if any SWAP gates are inserted in gate scheduling, certain dependency chain must lengthen and the depth of the scheduled circuit is larger than $N$. 
Therefore, if a solution within $N$ cycles exists, each gate in level $l$ must be scheduled at exactly cycle $l$ and there can be no SWAP gates inserted. 
It is also easy to see that the input $CX$ gates cannot constitute any SWAP gates. 
Therefore, the mapping from logical to physical qubits remains $\mu_0$ throughout all the cycles in the scheduled circuit. 
The gates $CX(q_l, q_{(l+1) \operatorname{mod} N})$ being all executable means that they are mapped to edges of $G_H$. 
This means $\mu_0(q_1)$, $\mu_0(q_2)$,..., $\mu_0(q_N)$, $\mu_0(q_1)$ is a Hamiltonian cycle in $G_H$.

In conclusion, we established the equivalence between a Hamiltonian cycle in $G_H$ and the existence of QC layout synthesis solution within depth $N$. Thus, the Hamiltonian cycle problem is reducible to the QC layout synthesis problem. The latter is NP-complete, since the former is known to be NP-complete.
\end{proof}

\section{Conclusion and Future Work} \label{sec:conclusion}

In this paper, we formulated the problem of quantum computing layout synthesis and proved its NP-completeness. 
We constructed QUEKO benchmarks, each has a known optimal depth for the given device.
With QUEKO, we examined four existing quantum computing layout synthesis tools, \verb|greedy| router in Cirq, $\mathsf{t}|\mathsf{ket}\rangle$, \verb|DenseLayout| plus \verb|StochasticSwap| in Qiskit, and Zulehner et al.~\cite{zulehner-2018-efficient} and showed rather surprising results. 
Despite over ten years of research and development efforts by both academia and industry, the current QC compilation flow is far from optimal. 
In fact, even combining the best performances of all tools evaluated, the optimality gaps range from 1.5x to 45x on average for circuits of feasible depth on existing devices. 
These gaps reveal that there is substantial room for research into QC layout synthesis, potentially equivalent to an order of improvement of the decoherence time, which would require much higher investment in quantum device technologies to achieve.
We plan to use QUEKO benchmarks as a guide to better layout synthesis tools.
On the other hand, our construction method opens up room for future extension.
With more specific control over the amount and structures of dependency chains, we believe it is possible to generate benchmarks with known optimal that are even more representative of real applications, which sometimes requires SWAP gates in optimal solutions.
In addition, we plan to extend our research to construct examples with known optimal solutions for fidelity optimization, as multiple studies have shown that fidelity is a very important metric for quantum circuits in the NISQ era.

\section*{Acknowledgment}
The authors would like to thank Iris Cong, Silas Dilkes, Dmitri Maslov, and Nengkun Yu for valuable comments on the manuscript.
This work is partially supported by NEC under the Center for Domain-Specific Computing Industrial Partnership Program.

\bibliographystyle{unsrt}
\bibliography{BochenTan}

\begin{thebibliography}{10}

\bibitem{arute-2019-quantum}
Frank Arute, Kunal Arya, Ryan Babbush, Dave Bacon, Joseph~C Bardin, Rami
  Barends, Rupak Biswas, Sergio Boixo, Fernando~GSL Brandao, David~A Buell,
  et~al.
\newblock Quantum supremacy using a programmable superconducting processor.
\newblock {\em Nature}, 574(7779):505--510, 2019.

\bibitem{nielsen-2010-quantum}
Michael~A Nielsen and Isaac~L Chuang.
\newblock {\em Quantum Computation and Quantum Information}.
\newblock Cambridge University Press, Cambridge, UK, 2010.

\bibitem{gambetta-2019-cramming}
Jay Gambetta and Sarah Sheldon.
\newblock Cramming more power into a quantum device.
\newblock
  \url{https://www.ibm.com/blogs/research/2019/03/power-quantum-device/}, March
  2019.

\bibitem{zulehner-2018-efficient}
A.~{Zulehner}, A.~{Paler}, and R.~{Wille}.
\newblock Efficient mapping of quantum circuits to the {IBM} {QX}
  architectures.
\newblock In {\em 2018 Design, Automation Test in Europe Conference Exhibition
  (DATE)}, pages 1135--1138, March 2018.

\bibitem{chang-2004-optimality}
{Chin-Chih Chang}, J.~{Cong}, M.~{Romesis}, and {Min Xie}.
\newblock Optimality and scalability study of existing placement algorithms.
\newblock {\em IEEE Transactions on Computer-Aided Design of Integrated
  Circuits and Systems}, 23(4):537--549, April 2004.

\bibitem{misc-2007-huge}
`{H}uge opportunity' in {IC} design optimization gained by {S}emiconductor
  {R}esearch {C}orporation, {N}ational {S}cience {F}oundation: {CAD} innovation
  could save industry billions.
\newblock \url{https://www.src.org/newsroom/press-release/2007/41/}, December
  2007.

\bibitem{shende-2003-synthesis}
V.~V. {Shende}, A.~K. {Prasad}, I.~L. {Markov}, and J.~P. {Hayes}.
\newblock Synthesis of reversible logic circuits.
\newblock {\em IEEE Transactions on Computer-Aided Design of Integrated
  Circuits and Systems}, 22(6):710--722, June 2003.

\bibitem{book07-mermin}
N~David Mermin.
\newblock {\em Quantum computer science: an introduction}.
\newblock Cambridge University Press, 2007.

\bibitem{cross-2017-open}
Andrew~W. Cross, Lev~S. Bishop, John~A. Smolin, and Jay~M. Gambetta.
\newblock Open quantum assembly language, 2017.

\bibitem{nam-2018-automated}
Yunseong Nam, Neil~J. Ross, Yuan Su, Andrew~M. Childs, and Dmitri Maslov.
\newblock Automated optimization of large quantum circuits with continuous
  parameters.
\newblock {\em npj Quantum Information}, 4(1), May 2018.

\bibitem{whitney-2007-automated}
Mark Whitney, Nemanja Isailovic, Yatish Patel, and John Kubiatowicz.
\newblock Automated generation of layout and control for quantum circuits.
\newblock In {\em Proceedings of the 4th International Conference on Computing
  Frontiers}, CF '07, pages 83--94, New York, NY, USA, 2007. Association for
  Computing Machinery.

\bibitem{maslov-2008-quantum}
D.~{Maslov}, S.~M. {Falconer}, and M.~{Mosca}.
\newblock Quantum circuit placement.
\newblock {\em IEEE Transactions on Computer-Aided Design of Integrated
  Circuits and Systems}, 27(4):752--763, April 2008.

\bibitem{hirata-2011-efficient}
Yuichi Hirata, Masaki Nakanishi, Shigeru Yamashita, and Yasuhiko Nakashima.
\newblock An efficient conversion of quantum circuits to a linear nearest
  neighbor architecture.
\newblock {\em Quantum Information \& Computation}, 11(1\&2):142--166, 2011.

\bibitem{shafaei-2013-optimization}
Alireza Shafaei, Mehdi Saeedi, and Massoud Pedram.
\newblock Optimization of quantum circuits for interaction distance in linear
  nearest neighbor architectures.
\newblock In {\em Proceedings of the 50th Annual Design Automation Conference},
  DAC '13, New York, NY, USA, 2013. Association for Computing Machinery.

\bibitem{shafaei-2014-qubit}
A.~{Shafaei}, M.~{Saeedi}, and M.~{Pedram}.
\newblock Qubit placement to minimize communication overhead in 2{D} quantum
  architectures.
\newblock In {\em 2014 19th Asia and South Pacific Design Automation Conference
  (ASP-DAC)}, pages 495--500, Jan 2014.

\bibitem{kole-2016-heuristic}
A.~{Kole}, K.~{Datta}, and I.~{Sengupta}.
\newblock A heuristic for linear nearest neighbor realization of quantum
  circuits by {SWAP} gate insertion using {$N$}-gate lookahead.
\newblock {\em IEEE Journal on Emerging and Selected Topics in Circuits and
  Systems}, 6(1):62--72, March 2016.

\bibitem{kole-2018-new}
A.~{Kole}, K.~{Datta}, and I.~{Sengupta}.
\newblock A new heuristic for {$N$}-dimensional nearest neighbor realization of
  a quantum circuit.
\newblock {\em IEEE Transactions on Computer-Aided Design of Integrated
  Circuits and Systems}, 37(1):182--192, Jan 2018.

\bibitem{wille-2014-optimal}
R.~{Wille}, A.~{Lye}, and R.~{Drechsler}.
\newblock Optimal {SWAP} gate insertion for nearest neighbor quantum circuits.
\newblock In {\em 2014 19th Asia and South Pacific Design Automation Conference
  (ASP-DAC)}, pages 489--494, Jan 2014.

\bibitem{siraichi-2018-qubit}
Marcos~Yukio Siraichi, Vin\'{\i}cius Fernandes~Dos Santos, Sylvain Collange,
  and Fernando Magno~Quintao Pereira.
\newblock Qubit allocation.
\newblock In {\em Proceedings of the 2018 International Symposium on Code
  Generation and Optimization}, CGO 2018, pages 113--125, New York, NY, USA,
  2018. Association for Computing Machinery.

\bibitem{childs-2019-circuit}
Andrew~M Childs, Eddie Schoute, and Cem~M Unsal.
\newblock Circuit transformations for quantum architectures.
\newblock In {\em 14th Conference on the Theory of Quantum Computation,
  Communication and Cryptography (TQC 2019)}. Schloss Dagstuhl-Leibniz-Zentrum
  fuer Informatik, 2019.

\bibitem{itoko-2019-quantum}
Toshinari Itoko, Rudy Raymond, Takashi Imamichi, Atsushi Matsuo, and Andrew~W.
  Cross.
\newblock Quantum circuit compilers using gate commutation rules.
\newblock In {\em Proceedings of the 24th Asia and South Pacific Design
  Automation Conference}, ASPDAC '19, pages 191--196, New York, NY, USA, 2019.
  Association for Computing Machinery.

\bibitem{wille-2019-mapping}
Robert Wille, Lukas Burgholzer, and Alwin Zulehner.
\newblock Mapping quantum circuits to {IBM} {QX} architectures using the
  minimal number of {SWAP} and {H} operations.
\newblock In {\em Proceedings of the 56th Annual Design Automation Conference
  2019}, DAC '19, New York, NY, USA, 2019. Association for Computing Machinery.

\bibitem{siraichi-2019-qubit}
Marcos~Yukio Siraichi, Vin\'{\i}cius Fernandes~dos Santos, Caroline Collange,
  and Fernando Magno Quint\~{a}o Pereira.
\newblock Qubit allocation as a combination of subgraph isomorphism and token
  swapping.
\newblock {\em Proc. ACM Program. Lang.}, 3(OOPSLA), October 2019.

\bibitem{cowtan-2019-qubit}
Alexander Cowtan, Silas Dilkes, Ross Duncan, Alexandre Krajenbrink, Will
  Simmons, and Seyon Sivarajah.
\newblock On the qubit routing problem.
\newblock In {\em 14th Conference on the Theory of Quantum Computation,
  Communication and Cryptography}, 2019.

\bibitem{bhattacharjee-2019-muqut}
D.~{Bhattacharjee}, A.~A. {Saki}, M.~{Alam}, A.~{Chattopadhyay}, and
  S.~{Ghosh}.
\newblock {MUQUT}: Multi-constraint quantum circuit mapping on {NISQ}
  computers: Invited paper.
\newblock In {\em 2019 IEEE/ACM International Conference on Computer-Aided
  Design (ICCAD)}, pages 1--7, Nov 2019.

\bibitem{lye-2015-determining}
A.~{Lye}, R.~{Wille}, and R.~{Drechsler}.
\newblock Determining the minimal number of {SWAP} gates for multi-dimensional
  nearest neighbor quantum circuits.
\newblock In {\em The 20th Asia and South Pacific Design Automation
  Conference}, pages 178--183, Jan 2015.

\bibitem{venturelli-2017-temporal}
Davide Venturelli, Minh Do, Eleanor Rieffel, and Jeremy Frank.
\newblock Temporal planning for compilation of quantum approximate optimization
  circuits.
\newblock In {\em Proceedings of the Twenty-Sixth International Joint
  Conference on Artificial Intelligence, {IJCAI-17}}, pages 4440--4446, 2017.

\bibitem{venturelli-2018-compiling}
Davide Venturelli, Minh Do, Eleanor Rieffel, and Jeremy Frank.
\newblock Compiling quantum circuits to realistic hardware architectures using
  temporal planners.
\newblock {\em Quantum Science and Technology}, 3(2):025004, February 2018.

\bibitem{booth-2018-temporal}
Kyle~EC Booth, Minh Do, J~Christopher Beck, Eleanor Rieffel, Davide Venturelli,
  and Jeremy Frank.
\newblock Comparing and integrating constraint programming and temporal
  planning for quantum circuit compilation.
\newblock In {\em Twenty-Eighth International Conference on Automated Planning
  and Scheduling}, 2018.

\bibitem{zulehner-2019-compiling}
Alwin Zulehner and Robert Wille.
\newblock Compiling {SU(4)} quantum circuits to {IBM} {QX} architectures.
\newblock In {\em Proceedings of the 24th Asia and South Pacific Design
  Automation Conference}, ASPDAC '19, pages 185--190, New York, NY, USA, 2019.
  Association for Computing Machinery.

\bibitem{kole-2019-improved}
A.~{Kole}, S.~{Hillmich}, K.~{Datta}, R.~{Wille}, and I.~{Sengupta}.
\newblock Improved mapping of quantum circuits to {IBM} {QX} architectures.
\newblock {\em IEEE Transactions on Computer-Aided Design of Integrated
  Circuits and Systems}, pages 1--1, 2019.

\bibitem{tannu-2019-all}
Swamit~S. Tannu and Moinuddin~K. Qureshi.
\newblock Not all qubits are created equal: A case for variability-aware
  policies for {NISQ}-era quantum computers.
\newblock In {\em Proceedings of the Twenty-Fourth International Conference on
  Architectural Support for Programming Languages and Operating Systems},
  ASPLOS '19, pages 987--999, New York, NY, USA, 2019. Association for
  Computing Machinery.

\bibitem{li-2019-tackling}
Gushu Li, Yufei Ding, and Yuan Xie.
\newblock Tackling the qubit mapping problem for {NISQ}-era quantum devices.
\newblock In {\em Proceedings of the Twenty-Fourth International Conference on
  Architectural Support for Programming Languages and Operating Systems},
  ASPLOS '19, pages 1001--1014, New York, NY, USA, 2019. Association for
  Computing Machinery.

\bibitem{murali-2019-noise}
Prakash Murali, Jonathan~M. Baker, Ali Javadi-Abhari, Frederic~T. Chong, and
  Margaret Martonosi.
\newblock Noise-adaptive compiler mappings for noisy intermediate-scale quantum
  computers.
\newblock In {\em Proceedings of the Twenty-Fourth International Conference on
  Architectural Support for Programming Languages and Operating Systems},
  ASPLOS '19, pages 1015--1029, New York, NY, USA, 2019. Association for
  Computing Machinery.

\bibitem{murali-2019-full}
Prakash Murali, Norbert~Matthias Linke, Margaret Martonosi, Ali~Javadi Abhari,
  Nhung~Hong Nguyen, and Cinthia~Huerta Alderete.
\newblock Full-stack, real-system quantum computer studies: Architectural
  comparisons and design insights.
\newblock In {\em Proceedings of the 46th International Symposium on Computer
  Architecture}, ISCA '19, pages 527--540, New York, NY, USA, 2019. Association
  for Computing Machinery.

\bibitem{ash-2019-qure}
Abdullah Ash-Saki, Mahabubul Alam, and Swaroop Ghosh.
\newblock {QURE}: Qubit re-allocation in noisy intermediate-scale quantum
  computers.
\newblock In {\em Proceedings of the 56th Annual Design Automation Conference
  2019}, DAC '19, New York, NY, USA, 2019. Association for Computing Machinery.

\bibitem{farhi-2014-qaoa}
Edward Farhi, Jeffrey Goldstone, and Sam Gutmann.
\newblock A quantum approximate optimization algorithm, 2014.

\bibitem{wille-2008-revlib}
R.~{Wille}, D.~{Gro{\ss}e}, L.~{Teuber}, G.~W. {Dueck}, and R.~{Drechsler}.
\newblock {RevLib}: An online resource for reversible functions and reversible
  circuits.
\newblock In {\em 38th International Symposium on Multiple Valued Logic (ismvl
  2008)}, pages 220--225, May 2008.

\bibitem{botea-2018-complexity}
Adi Botea, Akihiro Kishimoto, and Radu Marinescu.
\newblock On the complexity of quantum circuit compilation.
\newblock In {\em Eleventh Annual Symposium on Combinatorial Search}, 2018.

\end{thebibliography}

\end{document}